\newcommand{\namedref}[2]{\hyperref[#2]{#1~\ref*{#2}}}
\newcommand{\theoremref}[1]{\namedref{Theorem}{#1}}
\newcommand{\figureref}[1]{\namedref{Figure}{#1}}
\newcommand{\eqnref}[1]{\namedref{Equation}{#1}}
\newcommand{\R}{\mathbb{R}}
\newcommand{\zo}{\{0,1\}}
\providecommand{\card}[1]{\lvert#1\rvert}
\providecommand{\abs}[1]{\left\lvert#1\right\rvert}
\providecommand{\aset}[1]{\{#1\}}
\providecommand{\tuple}[1]{\langle{#1}\rangle}
\providecommand{\eqdef}{:=}
\providecommand{\pred}[1]{\ensuremath{\operatorname{\textsf{#1}}}}
\providecommand{\poly}{\mathrm{poly}}
\DeclareMathOperator{\Val}{Val}
\DeclareMathOperator{\sk}{sk}
\DeclareMathOperator{\est}{est}
\DeclareMathOperator{\indic}{\mathbf{1}}
\def\compactify{\itemsep=0pt \topsep=0pt \partopsep=0pt \parsep=0pt}
\newcommand{\Benczur}{Bencz{\'{u}}r\xspace}
\newcommand{\alert}[1]{\textbf{\color{red}
    [[[#1]]]}\marginpar{\textbf{\color{red}**}}\typeout{ALERT:
    \the\inputlineno: #1}}
\newtheorem{theorem}{Theorem}
\numberwithin{theorem}{section}
\newtheorem{remark}[theorem]{Remark}
\newtheorem{definition}[theorem]{Definition}
\newtheorem{observation}[theorem]{Observation}
\providecommand{\algorithmname}{Algorithm}
\begin{document}
\title{Sparsification of Two-Variable Valued CSPs}

\author{Arnold Filtser\thanks{%
    Ben-Gurion University of the Negev, Israel.
    Partially supported by the Lynn and William Frankel Center for Computer Sciences.
    Email: \texttt{arnoldf@cs.bgu.ac.il}
  }
  \and
  Robert Krauthgamer\thanks{%
    Weizmann Institute of Science, Israel. 
    Work supported in part by the Israel Science Foundation grant \#897/13 and the US-Israel BSF grant \#2010418.
    Email: \texttt{robert.krauthgamer@weizmann.ac.il}}
}

\maketitle

\begin{abstract}  
	
A valued constraint satisfaction problem (VCSP) instance $(V,\Pi,w)$ 
is a set of variables $V$ with a set of constraints $\Pi$ weighted by $w$.
Given a VCSP instance, we are interested in a re-weighted sub-instance 
$(V,\Pi'\subset \Pi,w')$ such that preserves the value of the given instance
(under every assignment to the variables) within factor $1\pm\epsilon$.
A well-studied special case is cut sparsification in graphs, 
which has found various applications.
We show that a VCSP instance consisting of a single boolean predicate $P(x,y)$ (e.g., for cut, $P=\pred{XOR}$) can be sparsified into $O(|V|/\epsilon^2)$ constraints if and only if the number of inputs that satisfy $P$ is anything but one (i.e., $|P^{-1}(1)| \neq 1$). 
Furthermore, this sparsity bound is tight unless $P$ is a relatively trivial predicate.
We conclude that also systems of 2SAT (or 2LIN) constraints can be sparsified.
\end{abstract}

\section{Introduction}
\label{sec:intro}

The seminal work of \Benczur and Karger \cite{BK96} showed that 
every edge-weighted undirected graph $G=(V,E,w)$ admits 
cut-sparsification within factor $(1+\epsilon)$ 
using $O(\epsilon^{-2} n\log n)$ edges, where we denote throughout $n=\card{V}$.
To state it more precisely, assume that edge-weights are always non-negative
and let $\pred{Cut}_{G}(S)$ denote the total weight of edges in $G$ 
that have exactly one endpoint in $S$.
Then for every such $G$ and $\epsilon\in (0,1)$,
there is a re-weighted subgraph 
$G_{\epsilon}=(V,E_{\epsilon}\subseteq E,w_{\epsilon})$
with $|E_\epsilon|\le O(\epsilon^{-2} n\log n)$ edges, such that
\begin{equation}  \label{eq:cut_sparsifier}
  \forall S\subset V, \qquad 
  \pred{Cut}_{G_{\epsilon}}(S)
  \in(1\pm\epsilon)\cdot\pred{Cut}_{G}(S) ,
\end{equation}
and moreover, such $G_\epsilon$ can be computed efficiently.

This sparsification methodology turned out to be very influential.
The original motivation was to speed up algorithms for cut problems
-- one can compute a cut sparsifier of the input graph 
and then solve an optimization problem on the sparsifier
-- and indeed this has been a tremendously effective approach,
see e.g.\ \cite{BK96,BK02,KL02,sherman2009breaking,madry2010fast}. 
Another application of this remarkable notion is to reduce space requirement, 
either when storing the graph or in streaming algorithms \cite{AG09}.
In fact, followup work offered several refinements, improvements, 
and extensions 
(such as to spectral sparsification or to cuts in hypergraphs,
which in turn have more applications) 
see e.g.~\cite{ST04a,ST11,SS11,dCHS11,FHHP11,KP12,NR13,BSS14,KK15}.
The current bound for cut sparsification is $O(n/\epsilon^2)$ edges,
proved by Batson, Spielman and Srivastava~\cite{BSS14}, 
and it is known to be tight \cite{ACKQWZ15}.

We study the analogous problem of sparsifying Constraint 
Satisfaction Problems (abbreviated CSPs), 
which was raised in \cite[Section 4]{KK15} and goes as follows.
Given a set of constraints on $n$ variables, 
the goal is to construct a sparse sub-instance,
that has approximately the same value as the original instance
under \emph{every possible assignment},
see Section~\ref{sec:predicates} for a formal definition.
Such sparsification of CSPs can be used to reduce storage space 
and running time of many algorithms. 

We restrict our attention to two-variable constraints (i.e., of arity 2) 
over boolean domain (i.e. alphabet of size 2).
To simplify matters even further we shall start with the case 
where all the constraints use the same predicate $P:\zo^2\to\zo$.
This restricted case of CSP sparsification 
already generalizes cut-sparsification 
--- simply represent every vertex $v\in V$ by a variable $x_v$, 
and every edge $(v,u)\in E$ by the constraint $x_v\neq x_u$.

Observe that such CSPs capture also other interesting graph problems,
such as the \emph{uncut edges} (using the predicate $x_v = x_u$), 
\emph{covered edges} (using the predicate $x_v \vee x_u$)
or the \emph{directed-cut edges} (using the predicate $x_v\wedge\neg x_u$).
Even though these graph problems are well-known and extensively studied,
we are not aware of any sparsification results for them,
and at a first glance such sparsification may even seem surprising, 
because these problems do not have the combinatorial structure exploited 
by \cite{BK96} (a bound on the number of approximately minimum cuts),
or the linear-algebraic description used by \cite{SS11,BSS14}
(as quadratic forms over Laplacian matrices).

\paragraph{Results.}
For CSPs consisting of a single predicate $P:\zo^2\to\zo$, 
we show in \theoremref{thm:main} that 
a $(1+\epsilon)$-sparsifier of size $O(n/\epsilon^2)$ always exists 
if and only if $\card{P^{-1}(1)} \neq 1$ 
(i.e., $P$ has 0,2,3 or 4 satisfying inputs). 
Observe that the latter condition includes the two graphical examples above
of uncut edges and covered edges, but excludes directed-cut edges.
We further show in \theoremref{thm:or_sketch} that our sparsity bound above
is tight, except for some relatively trivial predicates $P$.
We then build on our sparsification result in Section~\ref{sec:applications}
to obtain $(1+\epsilon)$-sparsifiers for other CSPs, including 
2SAT (which uses 4 predicate types) and 2LIN (which uses 2 predicate types).

Finally, we explore 
future directions, such as more general predicates 
and a generalization of the sparsification paradigm to sketching schemes.
In particular, we see that the above dichotomy  
according to number of satisfying inputs to the predicate extends to sketching.

\section{Two-Variable Boolean Predicates and Digraphs}
\label{sec:predicates}

A \emph{predicate} is a function $\pred{P}:\zo^2\to\zo$ (recall we restrict 
ourselves throughout to two variables and a boolean domain).
Given a set of variables $V$, a \emph{constraint} $\tuple{(v,u),\pred{P}}$ 
consists of a predicate $\pred{P}$ and an ordered pair $(v,u)$ of variables from $V$.
For an assignment $A:V\to \zo$, we say that $A$ \emph{satisfies} the constraint
whenever $\pred{P}(A(v),A(u))=1$.
A VCSP (Valued Constraint Satisfaction Problem) instance $\mathcal{I}$ is a triple $(V,\Pi,w)$, 
where $V$ is a set of variables, $\Pi$ is a set of constraints over $V$ 
(each of the form $\pi_i=\tuple{(v_i,u_i),p_i}$), 
and $w:\Pi\rightarrow\R_+$ is a weight function. 
The \emph{value} of an assignment $A:V\to\zo$ 
is the total weight of the satisfied constraints, i.e., 
\[
  \Val_{\mathcal{I}}\left(A\right)
  :=\sum_{\pi_{i}\in\Pi}w (\pi_{i})\cdot p_{i}(A(v_{i}),A(u_{i})) .
\]
For $\epsilon\in(0,1)$, an \emph{$\epsilon$-sparsifier} of $\mathcal{I}$ 
is a (re-weighted) sub-instance $\mathcal{I}_\epsilon=(V,\Pi_{\epsilon}\subseteq\Pi,w_{\epsilon})$ where
\[
  \forall A:V\to\zo, \qquad
  \Val_{\mathcal{I}_{\epsilon}}(A) \in (1\pm\epsilon)\cdot \Val_{\mathcal{I}}(A) .
\]
The goal is to minimize the number of constraints, i.e., $\card{\Pi_{\epsilon}}$. 
There are $16$ different predicates $\pred{P}:\{0,1\}^{2}\rightarrow\{0,1\}$,
which are listed in \figureref{fig:predicates} with names for easy reference.

	\begin{figure}
		\begin{center}
			\begin{tabular}{|c|c||c|c|c|c|c|c|c|c|c|c|c|c|c|c|c|c|}
				\hline 
				$x_{1}$ & $x_{2}$ & $\vec{0}$ & \pred{nOr} & $01$ & $0x$ & \pred{Dicut} & $x0$ & \pred{Cut} & \pred{nAnd} & \pred{And} & \pred{unCut} & $x1$ & $\overline{10}$ & $1x$ & $\overline{01}$ & \pred{Or} & $\vec{1}$\tabularnewline
				\hline \hline 
				0 & 0 &  & 1 &  & 1 &  & 1 &  & 1 &  & 1 &  & 1 &  & 1 &  & 1\tabularnewline
				\hline 
				0 & 1 &  &  & 1 & 1 &  &  & 1 & 1 &  &  & 1 & 1 &  &  & 1 & 1\tabularnewline
				\hline 
				1 & 0 &  &  &  &  & 1 & 1 & 1 & 1 &  &  &  &  & 1 & 1 & 1 & 1\tabularnewline
				\hline 
				1 & 1 &  &  &  &  &  &  &  &  & 1 & 1 & 1 & 1 & 1 & 1 & 1 & 1\tabularnewline
				\hline 
			\end{tabular}
			
\caption{\label{fig:predicates}
  All possible predicates $\pred{P}:\zo^2\to\zo$, where blank cells denote value $0$. 
  Predicates $0x,x0,x1,1x$ are determined by a single variable. 
  Predicates $01,\pred{Dicut},\overline{10},\overline{01}$ 
  are satisfied by a single assignment or all but a single one.    
}
\hrulefill
		\end{center}
	\end{figure}

We first focus on the case where all the constraints in $\Pi$ 
use the same predicate $\pred{P}$,%
\footnote{The collection of predicates used in a VCSP 
is sometimes called its \emph{signature}. 
In this paper we mainly deal with VCSPs whose signature is of size one.},
in which case we can represent 
the VCSP $\mathcal{I}$ by an edge-weighted digraph $G^{\mathcal{I}}=(V,E,w)$. 
Each variable in $V$ is represented by a vertex, 
and each constraint over the pair $(v,u)$ will be represented 
by a directed edge from $v$ to $u$, with the same weight as the constraint (formally, $E=\aset{ (v,u) \mid (\tuple{v,u},\pred{P}) \in\Pi}$, 
and abusing notation set edge weights $w(v,u)=w(\tuple{(v,u),P})$). 
This transformation preserves all the information about the VCSP 
and allows us to make reductions between VCSPs with different predicates $\pred{P}$ 
as their sole predicate.
	
Given a digraph $G$, a predicate \pred{P} and a subset $S\subseteq V$, define
\[
  \pred{P}_{G}(S)
  :=\sum_{(v,u)\in E}\pred{P}(\indic_{S}(v),\indic_{S}(u))\cdot w((v,u)) ,
\]
where $\indic_{S}$ denotes the indicator function.
For example, applying this definition to the cut predicate 
$\pred{Cut}: (x,y) \to \indic_{\aset{x\ne y}}$, 
we have 
\[
  \pred{Cut}_{G}(S)
  = \sum_{(v,u)\in E}\pred{Cut}(\indic_S(v),\indic_S(u)) \cdot w((v,u))
  = \sum_{(v,u)\in E} \abs{ \indic_S(v) - \indic_S(u) } \cdot  w((v,u)) ,
\]
which is just the total weight of the edges crossing the cut $S$.
This matches the definition we gave in the introduction,
except for the technical subtlety that $G$ is now a directed graph, 
which makes no difference for symmetric predicates like \pred{Cut}.
We shall assume henceforth that $G$ is directed.

We shall say that a sub-instance $G_\epsilon$ is 
an \emph{$\epsilon$-$\pred{P}$-sparsifier} of $G$ if 
\[
  \forall S\subseteq V, \qquad
  \pred{P}_{G_\epsilon}(S) \in (1\pm\epsilon)\cdot \pred{P}_{G}(S) .
\]
Observe that given an assignment $A$ for the variables $V$, 
we can set $S_{A}:=\aset{ u\mid A(u)=1 } $.
It then holds that $\Val_{\mathcal{I}}(A)=\pred{P}_{G^\mathcal{I}}(S_{A})$, 
where $G^\mathcal{I}$ is the appropriate digraph for the VCSP.
As there a bijection between such VCSPs and digraphs, we conclude
\begin{observation}\label{obs: vcsp_graph_similarity}
	The existence of an $\epsilon$-$\pred{P}$-\emph{sparsifier} $G_\epsilon=(V,E_\epsilon,w_\epsilon)$ for $G^\mathcal{I}$ implies the existence of an $\epsilon$-sparsifier $\mathcal{I}_\epsilon$ for $\mathcal{I}$ with $|E_\epsilon|$ constraints.
\end{observation}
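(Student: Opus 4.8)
The plan is to simply unwind the definitions, using the bijection between single-predicate VCSPs and edge-weighted digraphs together with the identity $\Val_{\mathcal{I}}(A)=\pred{P}_{G^\mathcal{I}}(S_A)$ already established in the excerpt. First I would take the given $\epsilon$-$\pred{P}$-sparsifier $G_\epsilon=(V,E_\epsilon\subseteq E,w_\epsilon)$ and define $\mathcal{I}_\epsilon$ to be the VCSP corresponding to $G_\epsilon$ under this bijection; concretely, $\mathcal{I}_\epsilon=(V,\Pi_\epsilon,w_\epsilon)$ with $\Pi_\epsilon=\aset{\tuple{(v,u),\pred{P}} \mid (v,u)\in E_\epsilon}$ and the inherited weights. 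Since $E_\epsilon\subseteq E$ and $G^{\mathcal{I}}$ was obtained from $\mathcal{I}$ by exactly this correspondence, we get $\Pi_\epsilon\subseteq\Pi$, so $\mathcal{I}_\epsilon$ is a genuine (re-weighted) sub-instance of $\mathcal{I}$, and clearly $\card{\Pi_\epsilon}=\card{E_\epsilon}$.

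Next I would verify the approximation guarantee. Fix any assignment $A:V\to\zo$ and let $S_A=\aset{u\mid A(u)=1}$. Applying the identity $\Val_{\mathcal{I}}(A)=\pred{P}_{G^{\mathcal{I}}}(S_A)$ to both $\mathcal{I}$ (with digraph $G^{\mathcal{I}}$) and $\mathcal{I}_\epsilon$ (with digraph $G_\epsilon$) yields $\Val_{\mathcal{I}}(A)=\pred{P}_{G^{\mathcal{I}}}(S_A)$ and $\Val_{\mathcal{I}_\epsilon}(A)=\pred{P}_{G_\epsilon}(S_A)$. Because $G_\epsilon$ is an $\epsilon$-$\pred{P}$-sparsifier of $G^{\mathcal{I}}$, we have $\pred{P}_{G_\epsilon}(S)\in(1\pm\epsilon)\cdot\pred{P}_{G^{\mathcal{I}}}(S)$ for every $S\subseteq V$, and in particular for $S=S_A$. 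Chaining these three facts gives $\Val_{\mathcal{I}_\epsilon}(A)\in(1\pm\epsilon)\cdot\Val_{\mathcal{I}}(A)$. Since $A$ was arbitrary, $\mathcal{I}_\epsilon$ is an $\epsilon$-sparsifier of $\mathcal{I}$ with $\card{E_\epsilon}$ constraints, as claimed.

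There is essentially no obstacle here: the statement is a bookkeeping consequence of the fact that assignments and vertex subsets are in bijection and that $\Val$ and $\pred{P}_G(\cdot)$ are literally the same quantity under that bijection. The only point requiring a line of care is checking that the sub-instance relation $\Pi_\epsilon\subseteq\Pi$ is inherited from $E_\epsilon\subseteq E$, which is immediate from how $G^{\mathcal{I}}$ encodes $\mathcal{I}$.
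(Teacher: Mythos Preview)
Your proposal is correct and is exactly the approach the paper takes: the observation is stated as an immediate consequence of the bijection between single-predicate VCSPs and digraphs together with the identity $\Val_{\mathcal{I}}(A)=\pred{P}_{G^{\mathcal{I}}}(S_A)$, and the paper does not spell out any further details. Your write-up simply makes explicit the one-line verification that the paper leaves to the reader.
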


Note that the converse is true as well, i.e., an $\epsilon$-sparsifier for $\mathcal{I}$ implies the existence of $\epsilon$-\pred{P}-sparsifier for $G_\mathcal{I}$ of size $|\Pi_\epsilon|$. 
From now on, we focus on finding an $\epsilon$-$\pred{P}$-sparsifier for an arbitrary digraph $G$ (for different choices of the predicate $\pred{P}$).

\section{A Single Predicate}
\label{sec:OnePredicate}
	
In this section we go over all the predicates $\pred{P}:\zo^2\to\zo$ 
and classify them into sparsifiable and non-sparsifiable predicates,
see Theorems~\ref{thm:cut_to_uncut}, \ref{thm:reduction_to_And}, 
and~\ref{thm:main}.
For simplicity, we state our sparsification results as existential, 
but in fact all these sparsifiers can be computed in polynomial time.
Our main technique is a simple graph transformation,
which seems to be very well-known but in other contexts.
We find it surprising that rather different predicates can be analyzed
so easily by applying the same elementary transformation.

In our classification, we appeal to two basic predicates,
the first of which is $\pred{Cut}$, which is already known to be sparsifiable. 
\begin{theorem}[\cite{BSS14}] 
\label{thm:cut}
For every digraph $G$
and parameter $\epsilon\in (0,1)$, there is an  $\epsilon$-$\pred{Cut}$-\emph{sparsifier} for $G$ with $O\left(|V|/\epsilon^{2}\right)$ edges.
\end{theorem}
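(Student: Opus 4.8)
The plan is to derive this directly from the cut-sparsification theorem of Batson, Spielman and Srivastava~\cite{BSS14} for undirected graphs; the only work is to pass from their setting to ours, which is immediate because $\pred{Cut}$ is a \emph{symmetric} predicate. Concretely, $\pred{Cut}(x,y)=\pred{Cut}(y,x)=\indic_{\{x\ne y\}}$, so the contribution $\pred{Cut}(\indic_S(v),\indic_S(u))\cdot w((v,u))$ of a directed edge $(v,u)$ to $\pred{Cut}_G(S)$ depends only on the unordered pair $\{v,u\}$ and on whether $S$ separates its endpoints, not on the orientation of the edge.

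First I would forget orientations: from $G=(V,E,w)$ form the undirected weighted graph $H$ on $V$ in which the weight of an unordered pair $\{v,u\}$ is the sum of $w$ over all directed edges of $G$ between $v$ and $u$ (in either direction). By the observation above, $\pred{Cut}_H(S)=\pred{Cut}_G(S)$ for every $S\subseteq V$. Next I would invoke \cite{BSS14} on $H$ to obtain a reweighted subgraph $H_\epsilon$ with $O(|V|/\epsilon^2)$ edges satisfying $\pred{Cut}_{H_\epsilon}(S)\in(1\pm\epsilon)\cdot\pred{Cut}_H(S)$ for all $S\subseteq V$. Finally I would lift $H_\epsilon$ back to a sub-instance of $G$: for each unordered pair $\{v,u\}$ kept by $H_\epsilon$, at least one of $(v,u),(u,v)$ belongs to $E$, so I select one such directed edge into $E_\epsilon$ and give it the weight assigned by $H_\epsilon$ (sparsifier weights may be arbitrary reals in $\R_+$, so this is legitimate). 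By symmetry of $\pred{Cut}$ this yields $\pred{Cut}_{G_\epsilon}(S)=\pred{Cut}_{H_\epsilon}(S)\in(1\pm\epsilon)\cdot\pred{Cut}_G(S)$ for every $S$, with $|E_\epsilon|=O(|V|/\epsilon^2)$.

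There is essentially no obstacle beyond \cite{BSS14} itself; the one point requiring (routine) care is keeping $E_\epsilon\subseteq E$ when $G$ contains parallel or anti-parallel edges, which is exactly what the merging step and the choice of a representative directed edge handle. Alternatively, one can treat $G$ directly as an undirected multigraph, apply \cite{BSS14} in its multigraph form, and read off $E_\epsilon\subseteq E$ with new weights without any merging at all.
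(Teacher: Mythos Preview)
Your proposal is correct and matches the paper's treatment: the paper does not prove this theorem at all but simply cites it from \cite{BSS14}, having already remarked (just before the statement) that the directedness of $G$ ``makes no difference for symmetric predicates like \pred{Cut}.'' Your merge-and-lift argument spells out exactly this triviality.
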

Our second basic predicate is the predicate $\pred{And}$,
which behaves significantly different. 
We call a digraph $G=(V,E)$ \emph{strongly asymmetric} 
if for every $(v,u)\in E$ it holds that $(u,v)\notin E$.

\begin{theorem}
\label{thm:And}
For every strongly asymmetric digraph $G=(V,E,w)$ 
with strictly positive weights and $\epsilon\in(0,1)$, 
every $\epsilon$-$\pred{And}$-sparsifier $G_{\epsilon}=(V,E_{\epsilon},w_{\epsilon})$
must satisfy $E_{\epsilon}=E$.
\end{theorem}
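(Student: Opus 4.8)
The plan is to prove that every edge of $G$ is indispensable: for each $(v,u)\in E$ I will exhibit a subset $S\subseteq V$ on which an $\epsilon$-$\pred{And}$-sparsifier that omits $(v,u)$ would be wrong. The natural certificate for the edge $(v,u)$ is the two-element set $S=\{v,u\}$.

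The key observation is that strong asymmetry is exactly what makes $S=\{v,u\}$ \emph{isolate} the single edge $(v,u)$. Since $\pred{And}_G(S)=\sum_{(x,y)\in E:\,x,y\in S}w((x,y))$, the only ordered pairs that can contribute are among $(v,u),(u,v),(v,v),(u,u)$; the reverse pair $(u,v)$ is absent by strong asymmetry, and the loops $(v,v),(u,u)$ are absent as well, since a loop would be its own reverse and is therefore already forbidden by strong asymmetry. Hence $\pred{And}_G(\{v,u\})=w((v,u))$, which is strictly positive by the hypothesis on $w$. Applying the sparsifier guarantee to this $S$ gives $\pred{And}_{G_\epsilon}(\{v,u\})\ge(1-\epsilon)\,w((v,u))>0$. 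But $G_\epsilon$ is a sub-digraph of the strongly asymmetric $G$, hence itself strongly asymmetric and loop-free, so the same count shows that $\pred{And}_{G_\epsilon}(\{v,u\})$ equals $w_\epsilon((v,u))$ if $(v,u)\in E_\epsilon$ and $0$ otherwise; positivity then forces $(v,u)\in E_\epsilon$. As $(v,u)\in E$ was arbitrary, $E\subseteq E_\epsilon$, and together with $E_\epsilon\subseteq E$ this yields $E_\epsilon=E$.

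There is no genuinely hard step; the content lies in choosing the right test set, and the argument essentially writes itself once one notices the loop-free consequence of strong asymmetry. Two hypotheses do real work and deserve a remark: strict positivity of $w$ (a zero-weight edge could be dropped for free, so the statement would fail without it), and strong asymmetry itself (for a symmetric instance the set $\{v,u\}$ sees both $(v,u)$ and $(u,v)$, whose weights may be redistributed, so the conclusion would be false — consistent with the fact, used elsewhere in the paper, that symmetric predicates such as $\pred{Cut}$ are sparsifiable). I also note that the very same argument gives the stronger conclusion $w_\epsilon((v,u))\in(1\pm\epsilon)\,w((v,u))$ for every edge, so in the strongly asymmetric case the sparsifier is forced to be $G$ itself up to a per-edge $(1\pm\epsilon)$ rescaling of the weights.
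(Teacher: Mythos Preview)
Your proof is correct and follows essentially the same approach as the paper: both use the test set $S=\{v,u\}$ to isolate the single edge $(v,u)$, observe that $\pred{And}_G(\{v,u\})=w((v,u))>0$ while a sparsifier missing that edge would have $\pred{And}_{G_\epsilon}(\{v,u\})=0$, and conclude $(v,u)\in E_\epsilon$. Your write-up is more detailed (explicitly ruling out loops and the reverse edge, and noting the per-edge weight bound), but the underlying argument is identical.
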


\begin{proof}
Let $G_{\epsilon}=\left(V,E_{\epsilon},w_{\epsilon}\right)$ be such a sparsifier, i.e., for every $S\subseteq V$ it holds that $\pred{And}_{G_{\epsilon}}(S)\in(1\pm\epsilon)\cdot\pred{And}_{G}(S)$.	
Then for every $e=(v,u)\in E$ we must have $(v,u)\in E_\epsilon$, 
as otherwise for the set $S=\aset{u,v}$ it will hold that $\pred{And}_{G_\epsilon}(\aset{u,v})=0$ while $\pred{And}_{G}(\aset{u,v})=w(e)>0$, 
a contradiction.
\end{proof}
	
\begin{remark}
For every digraph (which is not necessarily strongly asymmetric), 
the same proof shows that $|E_\epsilon|\ge \frac{1}{2}|E|$. 
\end{remark}
\begin{remark}
Our definition of an $\epsilon$-\pred{P}-sparsifier requires $G_\epsilon$ to be a subgraph of $G$, but we can state \theoremref{thm:And} in a more general way: For every digraph $G_{\epsilon}=(V,E_{\epsilon},w_{\epsilon})$ (not necessarily a subgraph) such that every $S\subseteq V$ satisfies 
$\pred{And}_{G_{\epsilon}}(S) \in (1\pm\epsilon)\cdot\pred{And}_{G}(S)$
necessarily $E_{\epsilon}$ agrees with $E$ up to the directions of the edges.
\end{remark}

Next, we show that every other predicates is similar either to \pred{Cut} 
or to $\pred{And}$ in terms of sparsifability. We describe a reduction
that will be useful to show both sparsifability and non-sparsifability.
(This reduction is based on a well-known transformation of a given graph,
called the ``bipartite double cover'', 
see e.g.~\cite{BHM80}, although we are not aware of its use in the same way.)
Let $\gamma$ be a function that maps a digraph $G=\left(V,E,w\right)$ where $V=\left\{ v_{1},v_{2},\dots,v_{n}\right\} $
to a digraph $\gamma(G) = (V^\gamma,E^\gamma,w^\gamma)$
where $V^\gamma = \aset{ v_{-n},\dots,v_{-1},v_{1},\dots,v_{n} }$,
$E^\gamma = \aset{ (v_{i},v_{-j}) \mid (v_{i},v_{j} )\in E }$, $w^\gamma((v_{i},v_{-j})) = w((v_{i},v_{j}))$.
For every subset $S\subseteq V$, 
we introduce the notation $-S:=\aset{ v_{-i}\mid v_{i}\in S } $,
$\bar{S}:=\aset{ v_{i}\mid v_{i}\in V\setminus S }$ and 
$-\bar{S}:=\aset{ v_{-i}\mid v_{i}\in V\setminus S }$.
\figureref{fig:reductions} illustrates the effect of $\gamma$ on 
an arbitrary set $S$.
	
	\begin{figure}
		\begin{center}
			\includegraphics[width=0.4\textwidth]{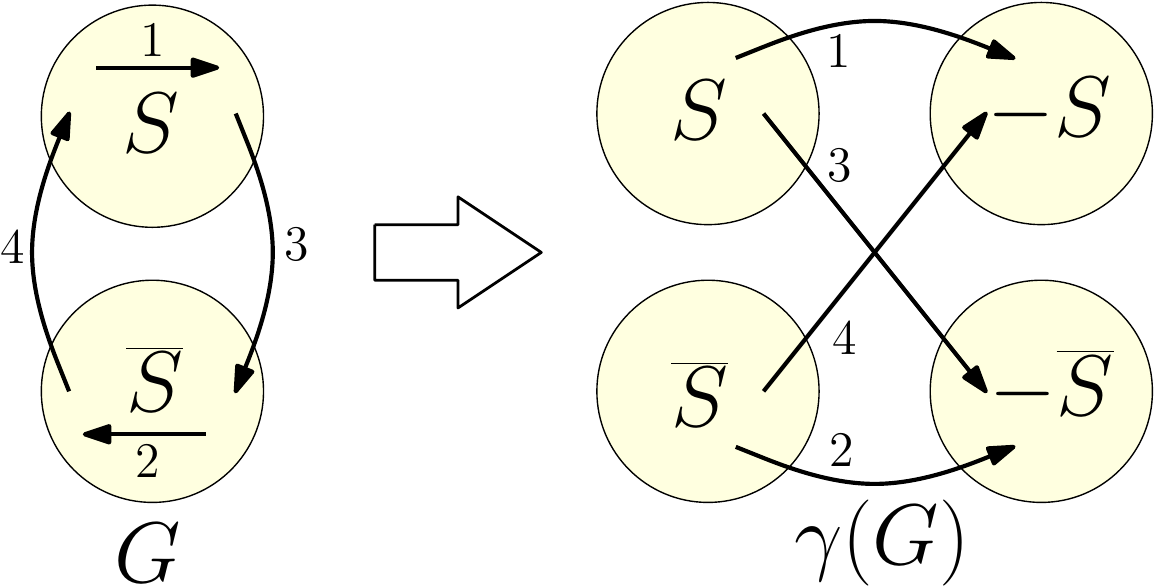}
			\caption{\small  The mapping $\gamma$ applied on $G$ and its effect on an arbitrary $S \subseteq V$. For example, an edge from $v_i\in S$ to $v_j\in \bar{S}$ is represented by an arrow of type 3, and becomes in $\gamma(G)$ an edge from $v_i\in S$ to $v_{-j}\in -\bar{S}$.}
			\label{fig:reductions}
		\end{center}
                \hrule
	\end{figure}

\begin{theorem}
\label{thm:cut_to_uncut}
For every digraph $G=(V,E,w)$ and $\epsilon\in(0,1)$
there is a sub-digraph $G_{\epsilon}$ with $O(|V|/\epsilon^{2})$ edges,
such that for every predicate $\pred{P} \in \aset{ \pred{Cut},\pred{unCut},\pred{Or}, \pred{nAnd},\overline{10}, \overline{01},x0,x1,0x,1x,\vec{1},\vec{0} }$,
the digraph $G_{\epsilon}$ is an $\epsilon$-$\pred{P}$-\emph{sparsifier} of $G$.
(Note that $G_\epsilon$ does not depend on $\pred{P}$.)
\end{theorem}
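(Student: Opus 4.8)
The core idea is to exploit the ``bipartite double cover'' map $\gamma$ to turn each of the listed predicates into the $\pred{Cut}$ predicate on the larger graph $\gamma(G)$, then invoke \theoremref{thm:cut} there and pull the sparsifier back to $G$. First I would record the key identity: for every $S \subseteq V$, the four groups of edges of $G$ (those going $S\to S$, $S\to\bar S$, $\bar S\to S$, $\bar S\to\bar S$) are mapped by $\gamma$ to edges of $\gamma(G)$ whose endpoints fall into the sets $S$, $-\bar S$, $-S$, $\bar S$ respectively (see \figureref{fig:reductions}). Hence if we consider the vertex subset $T := S \cup (-\bar S) \subseteq V^\gamma$, then an edge $(v_i,v_{-j})$ of $\gamma(G)$ is cut by $T$ exactly when the original edge $(v_i,v_j)$ went from $S$ to $S$ or from $\bar S$ to $\bar S$ --- that is, $\pred{Cut}_{\gamma(G)}(S\cup(-\bar S)) = \pred{unCut}_G(S)$. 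By choosing $T$ to be one of the few natural combinations of $S, \bar S, -S, -\bar S$, the same calculation realizes each predicate in the list as a cut value in $\gamma(G)$: e.g.\ $T = S\cup(-S)$ gives $\pred{Cut}_{\gamma(G)}(T) = \pred{Cut}_G(S)$, and intermediate choices pick out the predicates $x0,x1,0x,1x$ (one endpoint only), while $\vec0,\vec1$ are constants that are trivially sparsifiable (take the whole edge set once, or reuse any sparsifier since the value equals total weight, independent of $S$).

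The second step handles the predicates that are ``all but one assignment'' or disjunctions. Note $\pred{Or} = \vec1 - \vec0 \cdot(\text{nothing})$ is easier seen as $\pred{Or}(x,y) = 1 - \pred{nOr}(x,y)$ and $\pred{nOr}$ is the predicate $\indic_{x=y=0}$; more usefully, on the level of graphs $\pred{Or}_G(S) = w(E) - \pred{nOr}_G(S)$ where $\pred{nOr}_G(S)$ counts edges entirely inside $\bar S$. But $\pred{nOr}_G(S) = \frac12\big(w(E) - \pred{Cut}_G(S) - (\text{something symmetric})\big)$ is not quite linear, so instead I would express each listed predicate $\pred P$ directly as an affine combination of $\pred{Cut}_{\gamma(G)}$ evaluated at a \emph{fixed} choice of $T=T(S)$ plus a constant (the constant being a fixed multiple of $w(E)$, which every subgraph that preserves total weight also preserves). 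Concretely: $\pred{nAnd}(x,y) = 1 - xy$, and $xy = \pred{And}$; but rather than route through $\pred{And}$ (which is non-sparsifiable), observe $\pred{nAnd}_G(S) = w(E) - \pred{And}_G(S)$ and $\pred{And}_G(S) = $ weight of edges inside $S$, which by the double-cover identity with $T = S \cup (-S)$ equals $\tfrac12(w(E) - \pred{Cut}_{\gamma(G)}(S\cup(-S))) $ --- wait, that over-counts --- so the cleanest route is: edges inside $S$ in $G$ correspond in $\gamma(G)$ to edges from $S$ to $-S$; taking $T = S$ (just the positive copy of $S$), an edge $(v_i,v_{-j})$ is cut iff $v_i \in S$ and $v_j \in \bar S$, i.e.\ it is a $\pred{Dicut}$-type edge; taking $T = S \cup (-S)$, it is cut iff exactly one of $v_i\in S$, $v_j \in S$ holds, i.e.\ $\pred{Cut}$. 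So by running over the $2^2 = 4$ choices $T \in \{S, \bar S \text{ (i.e. } (-S)\cup\bar S\text{)}, S\cup(-S), \ldots\}$ --- there are really eight relevant sets built from $\{S,-S,\bar S,-\bar S\}$ --- one obtains every predicate value as $\pred{Cut}_{\gamma(G)}(T(S))$ up to a global additive constant and/or a sign, and for the ``all but one'' predicates $\overline{10},\overline{01}$ one uses the complement ($\pred{P}_G(S) = w(E) - \pred{P}'_G(S)$ for the dual predicate $\pred{P}'$).

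Given these identities, the construction is immediate and, crucially, $\pred P$-independent: apply \theoremref{thm:cut} to $\gamma(G)$, which has $2|V|$ vertices, to get an $\epsilon$-$\pred{Cut}$-sparsifier $H_\epsilon = (V^\gamma, F_\epsilon, w^\gamma_\epsilon)$ of $\gamma(G)$ with $|F_\epsilon| = O(|V^\gamma|/\epsilon^2) = O(|V|/\epsilon^2)$ edges; since $\gamma$ is a bijection between $E$ and $E^\gamma$, pull $F_\epsilon$ back to an edge set $E_\epsilon \subseteq E$ with the inherited weights, defining $G_\epsilon$. Then for each listed $\pred P$ and each $S$, using the identity $\pred P_G(S) = \alpha_{\pred P}\cdot \pred{Cut}_{\gamma(G)}(T_{\pred P}(S)) + \beta_{\pred P}\cdot w(E)$ (with $\alpha \in \{1,-1\}$, $\beta \in \{0,1\}$) and the fact that a $\pred{Cut}$-sparsifier also preserves $w(E) = \pred{Cut}_{\gamma(G)}(\text{all of } V^\gamma\text{'s one side})$ --- or simply preserves it by a separate trivial argument, or by noting $w(E)$ equals a cut value in $\gamma(G)$ since $\gamma(G)$ is bipartite --- we get $\pred P_{G_\epsilon}(S) \in (1\pm\epsilon)\pred P_G(S)$. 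The main obstacle is bookkeeping: verifying that each of the twelve predicates genuinely arises from one of the few subsets $T(S)$ with the claimed affine relation, and in particular handling the non-symmetric predicates ($x0,x1,0x,1x$, which depend on a single endpoint) and making sure the additive constant $w(E)$ is itself $(1\pm\epsilon)$-preserved --- the slick fix being that in the bipartite graph $\gamma(G)$, $w(E) = w^\gamma(E^\gamma) = \pred{Cut}_{\gamma(G)}(V)$ where $V$ here means the positive side $\{v_1,\dots,v_n\}$, so preservation of $w(E)$ is just one more instance of cut-preservation and requires nothing extra.
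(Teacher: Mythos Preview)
Your overall architecture --- build $\gamma(G)$, apply \theoremref{thm:cut} there, pull the sparsifier back, and verify that each listed predicate becomes a cut query on $\gamma(G)$ --- is exactly what the paper does, and your treatment of the ``two-ones'' and trivial predicates ($\pred{Cut},\pred{unCut},x0,x1,0x,1x,\vec 0,\vec 1$) matches the paper's: each of these really is a single cut $\pred{Cut}_{\gamma(G)}(T(S))$ for an appropriate $T(S)$.

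The gap is in your handling of the ``three-ones'' predicates $\pred{Or},\pred{nAnd},\overline{10},\overline{01}$. You propose to write $\pred{P}_G(S)=\alpha\cdot\pred{Cut}_{\gamma(G)}(T(S))+\beta\cdot w(E)$ with $\alpha\in\{1,-1\}$, $\beta\in\{0,1\}$. This fails for two independent reasons. First, no such identity exists: every cut in the bipartite graph $\gamma(G)$ picks out exactly \emph{two} of the four edge-types in \figureref{fig:reductions} (it is an XOR of a condition on the head and a condition on the tail), so $\pred{Cut}_{\gamma(G)}(T)$ and $w(E)-\pred{Cut}_{\gamma(G)}(T)$ each sum two edge-types, whereas $\pred{Or}_G(S)$ sums three. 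Second, and more fundamentally, even if an identity of the form $\pred{P}_G(S)=w(E)-\pred{Cut}_{\gamma(G)}(T(S))$ were available, preserving each summand to within $(1\pm\epsilon)$ only gives additive error $\epsilon\cdot(w(E)+\pred{Cut}_{\gamma(G)}(T(S)))$, which is \emph{not} a $(1\pm\epsilon)$-multiplicative guarantee when $\pred{P}_G(S)\ll w(E)$. Subtraction destroys relative error; you need a \emph{nonnegative} linear combination. The paper's fix is precisely this: for each three-ones predicate it exhibits three sets $f_P^1(S),f_P^2(S),f_P^3(S)\subseteq V^\gamma$ with
\[
\pred{P}_H(S)=\tfrac12\Big[\pred{Cut}_{\gamma(H)}(f_P^1(S))+\pred{Cut}_{\gamma(H)}(f_P^2(S))+\pred{Cut}_{\gamma(H)}(f_P^3(S))\Big],
\]
e.g.\ for $\pred{Or}$ one takes $f^1=S$, $f^2=-S$, $f^3=S\cup(-S)$, so that each of the three desired edge-types is counted exactly twice. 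Because all coefficients are positive, the $(1\pm\epsilon)$ cut guarantee transfers directly to $\pred{P}$.
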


\begin{proof}
Given $G$ and $\epsilon$, first construct $\gamma(G)$ as above.
Next, apply \theoremref{thm:cut} to obtain for $\gamma(G)$ a cut-sparsifier
$\gamma(G)_{\epsilon}=(V^\gamma,E^\gamma_{\epsilon}\subseteq E^\epsilon,w^\gamma_\epsilon)$,
which contains $O(\card{V^\gamma}/\epsilon^2)=O(\card{V}/\epsilon^2)$ edges.
Now construct a digraph $G_{\epsilon}=(V,E_\epsilon,w_{\epsilon})$ where 
$E_{\epsilon} = \aset{ (v_{i},v_{j}) \mid (v_{i},v_{-j})\in E^\gamma_\epsilon }$
and $w_{\epsilon}(v_{i},v_{j}) = w^\gamma_\epsilon (v_{i},v_{-j})$.
Observe that $\gamma(G_\epsilon)=\gamma(G)_{\epsilon}$,
i.e. if we apply $\gamma$ on $G_\epsilon$ we get exactly $\gamma(G)_\epsilon$.
		
Now suppose that for a predicate $\pred{P}$, 
there is a function $f_P:2^{V} \to 2^{V^\gamma}$ such that 
for every digraph $H$ on the vertex set $V$,
it holds that 
\begin{align}  \label{eq:PHS}
  \forall S\subset V, \qquad
  \pred{P}_{H}(S) = \pred{Cut}_{\gamma(H)}(f_P(S)).  
\end{align}
Then we could apply~\eqref{eq:PHS} twice, 
first to $G_\epsilon$ and then to $G$, and obtain that
\[
  \forall S\subset V, \qquad
  \pred{P}_{G_{\epsilon}}(S)
  = \pred{Cut}_{\gamma(G)_{\epsilon}}(f_P(S)) 
  \in (1\pm\epsilon) \cdot\pred{Cut}_{\gamma(G)} (f_P(S))
  = (1\pm\epsilon)\cdot \pred{P}_{G}(S) .
 \]
Hence, the existence of such a function $f_P$ 
implies that $G_\epsilon$ is an $\epsilon$-$\pred{P}$-sparsifier. 
And indeed, we can show such $f_P$ for some predicates $\pred{P}$, as follows.
\begin{itemize} \compactify
\item $f_{\pred{unCut}}(S)=S\cup \bar{S}$; 
\item $f_{\pred{Cut}}(S)=S\cup -S$;
\item $f_{0x}(S)=\bar{S}$;
\item $f_{x0}(S)=-\bar{S}$;
\item $f_{x1}(S)=-S$;
\item $f_{1x}(S)=S$;
\item $f_{\vec{1}}(S)=S\cup\bar{S}$; and 
\item $f_{\vec{0}}(S)=\emptyset$. 
\end{itemize}
To verify that $f_{\pred{unCut}}(S)=S\cup \bar{S}$ satisfies \eqnref{eq:PHS}, i.e., that $\pred{unCut}_H(S) = \pred{Cut}_{\gamma(H)}(S\cup \bar S)$,
observe that both sides consist exactly of the edges of types $1$ and $2$
in \figureref{fig:reductions}.
The other predicates can be easily verified similarly, 
which completes the proof for all 
$\pred{P}\in \aset{ \pred{Cut}, \pred{unCut}, 0x, x0, x1, 1x, \vec{1}, \vec{0} }$.

To show that $G_\epsilon$ is a sparsifier also for predicates 
$\pred{P}\in \aset{ \pred{Or}, \pred{nAnd}, \overline{10},\overline{01}}$ 
we need a slightly more general argument. 
Suppose that for a predicate $\pred{P}$, 
there are functions $f^1_P,f^2_P,f^3_P:2^{V}\to 2^{V^\gamma}$ 
such that for every digraph $H$ on the vertex set $V$, 
\begin{align}  \label{eq:PHS3}
  \pred{P}_{H}(S)
  = \tfrac12 \left[\pred{Cut}_{\gamma(H)}(f_P^1(S))
  +\pred{Cut}_{\gamma(H)}(f_{P}^2(S))
  +\pred{Cut}_{\gamma(H)}(f_{P}^3(S))
  \right] .
\end{align}
Then we could apply~\eqref{eq:PHS3} twice,
first to $G_\epsilon$ and then to $G$, and obtain that
\begin{eqnarray*}
  \pred{P}_{G_{\epsilon}}\left(S\right)
  &=&\tfrac12 \left[
      \pred{Cut}_{\gamma(G)_{\epsilon}}(f_{P}^{1}(S))
      +\pred{Cut}_{\gamma(G)_{\epsilon}}(f_{P}^{2}(S))
      +\pred{Cut}_{\gamma(G)_{\epsilon}}(f_{P}^{3}(S)) \right]
  \\
  &\in& (1\pm\epsilon)\cdot\tfrac12\left[
      \pred{Cut}_{\gamma(G)}(f_{P}^{1}(S))
      +\pred{Cut}_{\gamma(G)}(f_{P}^{2}(S))
      +\pred{Cut}_{\gamma(G)}(f_{P}^{3}(S)) \right]
  \\
  &=& (1\pm\epsilon) \cdot \pred{P}_{G}(S) .
\end{eqnarray*}
Hence, the existence of such three functions will imply that $G_\epsilon$ 
is an $\epsilon$-$\pred{P}$-sparsifier. 
And indeed, we let 
\begin{itemize} \compactify
\item $f_{\pred{Or}}^{1}(S)=S$, 
$f_{\pred{Or}}^{2}(S)=-S$, 
$f_{\pred{Or}}^{3}(S)=S\cup-S$;
\item $f_{\pred{nAnd}}^{1}(S)=\bar{S}$, 
$f_{\pred{nAnd}}^{2}(S)=-\bar{S}$, 
$f_{\pred{nAnd}}^{3}(S)=\bar{S}\cup-\bar{S}$;
\item $f_{\overline{10}}^{1}(S)=\bar{S}$, 
$f_{\overline{10}}^{2}(S)=-S$, 
$f_{\overline{10}}^{3}(S)=\bar{S}\cup-S$; and
\item
$f_{\overline{01}}^{1}(S)=S$, 
$f_{\overline{01}}^{2}(S)=-\bar{S}$, 
$f_{\overline{01}}^{3}(S)=S\cup-\bar{S}$.
\end{itemize}		
To verify that $f^1_{\pred{Or}},f^2_{\pred{Or}},f^3_{\pred{Or}}$ satisfies \eqnref{eq:PHS3}, 
observe that both sides consist exactly of the edges of types $1,3,4$ 
in \figureref{fig:reductions}.
The other predicates can be easily verified similarly, 
which completes the proof for all 
$\pred{P}\in \aset{ \pred{Or},\pred{nAnd},\overline{10},\overline{01} }$.
\end{proof}

	Next, we use $\gamma$ for a reductions from $\pred{And}$ to all the remaining predicates. In particular it will imply their ``resistance to sparsification''.
	
	\begin{theorem}
		\label{thm:reduction_to_And} Given parameters $n$ and $m\le {n \choose 2}$, there is a digraph $G=\left(V,E,w\right)$ with $2n$ vertices and $m$ edges such that for every $\epsilon\in(0,1)$ and every predicate $\pred{P}\in\left\{\pred{nOr},01,\pred{Dicut}, \pred{And}\right\}$, for every $\epsilon$-$\pred{P}$-sparsifier $G_\epsilon=(V,E_\epsilon,w_\epsilon)$ of $G$  it holds that that $E_{\epsilon}=E$. (Note that $G$ does not depend on $\pred{P}$.)
	\end{theorem}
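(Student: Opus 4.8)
The plan is to reduce the four predicates $\pred{nOr},01,\pred{Dicut},\pred{And}$ to $\pred{And}$ via the map $\gamma$ and then invoke \theoremref{thm:And}. First I would fix the hard instance independently of $\pred P$: since $m\le\binom n2$, pick any strongly asymmetric simple digraph $H=(V_0,E_0,w_0)$ on the $n$ vertices $v_1,\dots,v_n$ with exactly $m$ edges (for instance, choose $m$ of the $\binom n2$ unordered pairs and orient each from its lower-indexed endpoint to its higher-indexed one) and set all weights to $1$. Let $G:=\gamma(H)$. By construction $G$ has $2n$ vertices and $m$ edges, it has strictly positive weights, and it is strongly asymmetric since every edge of $\gamma(H)$ goes from a positively-indexed vertex to a negatively-indexed one; in particular $G$ does not depend on $\pred P$.

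Next, for each predicate $\pred P\in\{\pred{nOr},01,\pred{Dicut},\pred{And}\}$ I would exhibit a single map $g_{\pred P}\colon 2^{V_0}\to 2^{V^\gamma}$ such that for \emph{every} digraph $H'$ on the vertex set $V_0$ and every $S\subseteq V_0$,
\[
  \pred{P}_{\gamma(H')}\bigl(g_{\pred P}(S)\bigr)=\pred{And}_{H'}(S).
\]
Since an edge $(v_i,v_{-j})$ of $\gamma(H')$ contributes to the left side exactly when the pair $\bigl(\indic_{g_{\pred P}(S)}(v_i),\indic_{g_{\pred P}(S)}(v_{-j})\bigr)$ satisfies $\pred P$, it suffices to choose $g_{\pred P}(S)$ so that membership of $v_i$ encodes ``$v_i\in S$'' and membership of $v_{-j}$ encodes ``$v_j\in S$'' in the pattern dictated by the unique satisfying assignment of $\pred P$. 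Concretely I would take $g_{\pred{And}}(S)=S\cup -S$, $g_{\pred{Dicut}}(S)=S\cup -\bar S$, $g_{01}(S)=\bar S\cup -S$, and $g_{\pred{nOr}}(S)=\bar S\cup -\bar S$, and verify each identity by the same one-line case check used for $f_{\pred{unCut}}$ in the proof of \theoremref{thm:cut_to_uncut} (looking at which edge types of \figureref{fig:reductions} survive).

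Finally I would run the reduction. Let $G_\epsilon=(V^\gamma,E_\epsilon,w_\epsilon)$ be an arbitrary $\epsilon$-$\pred P$-sparsifier of $G$. Because $E_\epsilon\subseteq E^\gamma$ and every edge of $E^\gamma$ has the form $(v_i,v_{-j})$, there is a sub-instance $H_\epsilon$ of $H$ (same vertex set, a re-weighted subset of $E_0$) with $\gamma(H_\epsilon)=G_\epsilon$. Applying the displayed identity twice — first to $H_\epsilon$, then to $H$ — gives, for all $S\subseteq V_0$,
\[
  \pred{And}_{H_\epsilon}(S)=\pred{P}_{G_\epsilon}\bigl(g_{\pred P}(S)\bigr)\in(1\pm\epsilon)\cdot\pred{P}_{G}\bigl(g_{\pred P}(S)\bigr)=(1\pm\epsilon)\cdot\pred{And}_{H}(S),
\]
so $H_\epsilon$ is an $\epsilon$-$\pred{And}$-sparsifier of $H$. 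Since $H$ is strongly asymmetric with strictly positive weights, \theoremref{thm:And} forces $E(H_\epsilon)=E_0$, and hence $E_\epsilon=E^\gamma=E$, which is the claim.

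The only part requiring any care is the routine verification of the four identities and the observation in the last paragraph that a subgraph $G_\epsilon$ of $\gamma(H)$ is automatically $\gamma(H_\epsilon)$ for a subgraph $H_\epsilon$ of $H$ — immediate here since $\gamma$ neither merges nor reorients edges. I do not expect a genuine obstacle; essentially all the content lies in guessing the right sets $g_{\pred P}(S)$, which the bipartite-like structure of $\gamma(H)$ (edges only from positive to negative vertices) makes transparent. Alternatively, one could argue directly without \theoremref{thm:And}, by isolating each edge $(v_i,v_{-j})$ with an explicit test set (e.g.\ $\{v_i,v_{-j}\}$ for $\pred{And}$, $V^\gamma\setminus\{v_i,v_{-j}\}$ for $\pred{nOr}$, $(V_0\setminus\{v_i\})\cup\{v_{-j}\}$ for $01$, and $\{v_i\}\cup(-V_0\setminus\{v_{-j}\})$ for $\pred{Dicut}$), on which exactly one edge of $G$ fires with value $w(e)>0$; I would keep the $\gamma$-based version as the main proof for uniformity with the rest of the section.
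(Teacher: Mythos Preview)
Your proposal is correct and essentially identical to the paper's own proof: both apply $\gamma$ to an arbitrary strongly asymmetric $n$-vertex digraph with $m$ positively-weighted edges, exhibit the very same four set maps $S\mapsto S\cup -S$, $S\cup -\bar S$, $\bar S\cup -S$, $\bar S\cup -\bar S$ to turn a $\pred P$-sparsifier of the $2n$-vertex graph into an $\pred{And}$-sparsifier of the original, and then invoke \theoremref{thm:And}. Your naming (calling the $2n$-vertex graph $G$ and the $n$-vertex graph $H$) is in fact more consistent with the theorem statement than the paper's, and your closing direct test-set argument is a valid alternative that the paper does not mention.
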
	
	
	\begin{proof}
		Let $G=\left(V,E,w\right)$ be an arbitrary strongly asymmetric digraph with $n$ vertices, $m$ edges and strictly positive weights. Let $\gamma(G)$ be the digraph constructed by our reduction. Note that $\gamma(G)$ consist of $2n$ vertices and $m$ edges. $\gamma(G)$ will be the digraph for which we will prove the theorem.
		
		Fix some predicate \pred{P}. Let $\gamma(G)_{\epsilon}=\left(V^\gamma,E^\gamma_{\epsilon}\subseteq E^\epsilon,w^\gamma_{\epsilon}\right)$ be some $\epsilon$-$\pred{P}$-sparsifier for $\gamma(G)$.
		Let $G_{\epsilon}=\left(V,E_{\epsilon},w_{\epsilon}\right)$ be a
		digraph where $E_{\epsilon}=\left\{ \left(v_{i},v_{j}\right)\mid\left(v_{i},v_{-j}\right)\in E^\gamma_{\epsilon}\right\} $
		and $w_{\epsilon}\left(\left(v_{i},v_{j}\right)\right)=w^\gamma_{\epsilon}\left(\left(v_{i},v_{-j}\right)\right)$.
		Note that $\gamma(G_\epsilon)=\gamma(G)_{\epsilon}$. 
		
		Now suppose that there is a function $f_P:2^{V} \to 2^{V^\gamma}$ such that 
		for every digraph $H$ on the vertex set $V$,
		it holds that 
		\begin{align}  \label{eq:PHS_And}
		\forall S\subset V, \qquad
		\pred{And}_{H}\left(S\right)=\pred{P}_{\gamma(H)}\left(f_P(S)\right).  
		\end{align}
		Then we could apply~\eqref{eq:PHS_And} twice, 
		first to $G_\epsilon$ and then to $G$, and obtain that
		\[
		\forall S\subset V, \qquad
		\pred{And}_{G_{\epsilon}}(S)
		= \pred{P}_{\gamma(G)_{\epsilon}}(f_P(S)) 
		\in (1\pm\epsilon) \cdot\pred{P}_{\gamma(G)} (f_P(S))
		= (1\pm\epsilon)\cdot \pred{And}_{G}(S) .
		\]
		Hence, assuming such a function $f$ exists, $G_\epsilon$ is an $\epsilon$-\pred{And}-sparsifier for $G$. According to \theoremref{thm:And}, necessarily $E_\epsilon=E$, and in particular $E^\gamma_{\epsilon}=E^\gamma$.
		
		Hence, The existence of such functions $f_P$ for all $\pred{P}\in\left\{\pred{nOr},01,\pred{Dicut}, \pred{And}\right\}$ will imply our theorem.
		And indeed, we let 
		\begin{itemize} \compactify
			\item $f_{And}(S)=S\cup-S$;
			\item $f_{nOr}(S)=\bar{S}\cup -\bar{S}$;
			\item $f_{Dicut}(S)=S\cup-\bar{S}$; and
			\item $f_{01}(S)=\bar{S}\cup-S$.
		\end{itemize}	
		
		To verify that $f_{Dicut}(S)=S\cup-\bar{S}$ satisfies \eqnref{eq:PHS_And}, 
		observe that both sides consist exactly of the edges of type $1$ 
		in \figureref{fig:reductions}.
		The other predicates can be easily verified similarly. 
	\end{proof}
	
We conclude our main theorem, which basically puts together Theorems~\ref{thm:cut_to_uncut} and~\ref{thm:reduction_to_And}.

	\begin{theorem}\label{thm:main}
		Let $\pred{P}$ be a binary predicate, and let $\epsilon\in(0,1)$ be some parameter. 
		\begin{itemize}
			\item If $\pred{P}$ has a single ``1'' in its truth table then there exist a VCSP $\mathcal{I}=\left(V,\Pi,w\right)$ with a single predicate $\pred{P}$, such that every $\epsilon$-$\pred{P}$-sparsifier of $\mathcal{I}$ will have $\Omega(|V|^{2})$ constraints. 
			\item If $\pred{P}$ does not has a single ``1'' in its truth table then for every VCSP $\mathcal{I}=\left(V,\Pi,w\right)$ with single predicate $\pred{P}$, there exists an $\epsilon$-$\pred{P}$-sparsifier
			with $O\left(|V|/\epsilon^{2}\right)$ constraints.
		\end{itemize}
	\end{theorem}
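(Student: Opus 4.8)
The plan is to observe that the claimed dichotomy is nothing more than a repackaging of \theoremref{thm:cut_to_uncut} and \theoremref{thm:reduction_to_And}, once we sort the $16$ predicates of \figureref{fig:predicates} by their number of satisfying inputs. First I would record the (purely mechanical) count: exactly four predicates have a single ``1'' in their truth table, namely $\pred{nOr}, 01, \pred{Dicut}, \pred{And}$ (with the satisfying input at $00, 01, 10, 11$ respectively), and these are precisely the predicates covered by \theoremref{thm:reduction_to_And}; the remaining twelve predicates, having $0$, $2$, $3$, or $4$ satisfying inputs, are exactly the list $\aset{\pred{Cut}, \pred{unCut}, \pred{Or}, \pred{nAnd}, \overline{10}, \overline{01}, x0, x1, 0x, 1x, \vec{1}, \vec{0}}$ covered by \theoremref{thm:cut_to_uncut}. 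Thus the two bullets of the theorem correspond to the two theorems, and nothing falls through the cracks.

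For the first bullet, given such a $\pred{P}$ I would apply \theoremref{thm:reduction_to_And} with $m = \binom{n}{2}$: it produces a digraph $G$ on $2n$ vertices with $m = \Theta(n^2)$ edges (concretely, take a tournament on $n$ vertices with unit weights and apply the map $\gamma$) such that every $\epsilon$-$\pred{P}$-sparsifier of $G$ retains all $m$ edges. Reading $G$ as a VCSP $\mathcal{I} = (V, \Pi, w)$ with $|V| = 2n$ and $|\Pi| = m$ via the VCSP--digraph correspondence of \obref{obs: vcsp_graph_similarity} (and its converse), any $\epsilon$-$\pred{P}$-sparsifier of $\mathcal{I}$ induces one of $G$ of the same size, and hence must have all $m = \binom{n}{2} = \Omega(|V|^2)$ constraints.

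For the second bullet, given a VCSP $\mathcal{I} = (V, \Pi, w)$ with a single predicate $\pred{P}$ from the twelve-element list, I would pass to the digraph $G^{\mathcal{I}} = (V, E, w)$, apply \theoremref{thm:cut_to_uncut} to obtain a sub-digraph $G_\epsilon$ with $O(|V|/\epsilon^2)$ edges that is an $\epsilon$-$\pred{P}$-sparsifier of $G^{\mathcal{I}}$ for this $\pred{P}$ (indeed for every predicate in the list simultaneously), and then translate $G_\epsilon$ back into a re-weighted sub-instance $\mathcal{I}_\epsilon \subseteq \mathcal{I}$ using \obref{obs: vcsp_graph_similarity}; this is the desired $\epsilon$-sparsifier with $O(|V|/\epsilon^2)$ constraints. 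The only real obstacle is the initial bookkeeping step --- confirming that ``has a single `1'\,'' versus ``does not'' partitions the predicates exactly into the sets handled by the two preceding theorems --- which, as noted above, it does.
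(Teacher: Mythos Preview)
Your proposal is correct and matches the paper's own approach, which simply states that \theoremref{thm:main} ``basically puts together Theorems~\ref{thm:cut_to_uncut} and~\ref{thm:reduction_to_And}.'' You have spelled out the bookkeeping (the partition of the sixteen predicates) and the translation between VCSPs and digraphs via \obref{obs: vcsp_graph_similarity} that the paper leaves implicit, and your instantiation of \theoremref{thm:reduction_to_And} with $m=\binom{n}{2}$ to obtain $\Omega(|V|^2)$ constraints is exactly what is intended.
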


\section{Lower Bounds (for a Single Predicate)}
\label{sec:lwr_bound}

In this section we will show that \theoremref{thm:cut_to_uncut} is tight. More precisely, we will show that for every $\pred{P} \in \aset{ \pred{Cut},\pred{unCut},\pred{Or}, \pred{nAnd},\overline{10},\overline{01}}$,
there exists an $n$-vertex graph $G$ such that every $\epsilon$-\pred{P}-sparsifier $G_\epsilon$ of $G$ must contain $\Omega(n/\epsilon^2)$ edges.%
\footnote{The other predicates $\aset{ x0,x1,0x,1x,\vec{1},\vec{0} }$, are kind of trivial in the sense of sparsification. $\vec{0}$ sparsified by the empty graph. $\vec{1}$ can be sparsified using a single edge. $\aset{ x0,x1,0x,1x}$ could be sparsified using $n$ edges.
}
The first step was done by \cite{ACKQWZ15}, who showed that \theoremref{thm:cut} is tight, i.e., for every $n$ and $\epsilon\in (1/\sqrt{n},1)$, there exists $n$-vertex graph $G$ such that  every $\epsilon$-\pred{Cut}-sparsifier $G_\epsilon$ of $G$ must contain $\Omega(n/\epsilon^2)$ edges.
Using our reduction $\gamma$ in similar manner to \theoremref{thm:cut_to_uncut}, this lower bound can be extended to \pred{unCut} based on the fact that $\pred{Cut}_G(S)=\pred{unCut}_{\gamma(G)}\left(S\cup -\bar{S}\right)$. However, $\gamma$ fails to extend the lower bound to predicates with three $1$'s in their truth table. To this end, we will define sketching schemes, a variation of sparsification
where the goal is to maintain the approximate value of every assignment 
using a small data structure, possibly without any combinatorial structure,
see definition below.
We will use a lower bound on the sketch size of \pred{Cut} from \cite{ACKQWZ15} to prove lower bound on the number of edges in a sparsifier (and also on the sketch size) for \pred{OR}. The extension to other predicates with three $1$'s in their truth table is straightforward using $\gamma$.
Sketching is interesting for its own, and we have further discussion and lower bounds regarding sketching in Section~\ref{sec:sketching}. 

\medskip
Formally, a \emph{sketching scheme} (or a \emph{sketch} in short) is a pair of algorithms $(\sk,\est)$.
Given a weighted digraph $G=(V,E,w)$ and a predicate \pred{P}, 
algorithm $\sk$ returns a string $\sk_{G}$ 
(intuitively, a short encoding of the instance). 
Given $\sk_{\mathcal{I}}$ and a subset $S\subseteq V$, 
algorithm $\est$ returns a value (without looking at $G$) 
that estimates $\pred{P}_{G}(S)$.	
We say that it is an $\epsilon$-\pred{P}-\emph{sketching-scheme} if for every digraph $G$, and for every subset $S\subseteq V$, $\est(\sk_G,S)\in (1\pm \epsilon)\cdot \pred{P}_{G}(S)$.
The \emph{sketch-size} is $\max_{G}|\sk_G|$, 
the maximal length of the encoding string over all the digraphs with $n$ variables, often measured in bits.	
$sk$ might be probabilistic algorithm, but for our purposes it is enough to think only on the deterministic case. 
Note that an algorithm for constructing $\epsilon$-sparsifiers always provides an $\epsilon$-sketching-scheme, where the sketch-size is asymptotically equal to the number of constraints in the constructed sparsifiers when measured in machine words (and up to logarithmic factors when measured in bits).
Sparsification is advantageous over general sketching as it preserves the combinatorial structure of the problem. Nevertheless, one may be interested in constructing sketches as they may potentially require significantly smaller storage.

\begin{theorem}\label{thm:or_sketch}
	Fix a predicate $\pred{P} \in \aset{ \pred{Cut},\pred{unCut},\pred{Or}, \pred{nAnd},\overline{10}}$, an integer $n$ and $\epsilon\in(1/\sqrt{n},1)$. 
	The sketch-size of every $\epsilon$-\pred{P}-sketching-scheme on $n$ variables is $\Omega(n/\epsilon^2)$.
	Moreover, there is an $n$-vertex digraph $G$, such that every $\epsilon$-\pred{P}-sparsifier of $G$ has $\Omega(n/\epsilon^2)$ edges.
\end{theorem}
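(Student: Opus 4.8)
\textbf{Proof plan for \theoremref{thm:or_sketch}.}
The strategy is to reduce everything to the known lower bound on sketching $\pred{Cut}$ from \cite{ACKQWZ15}, which asserts that any $\epsilon$-$\pred{Cut}$-sketching-scheme on $m$ variables requires $\Omega(m/\epsilon^2)$ bits (and that this is witnessed by an explicit hard instance). First I would handle $\pred{Cut}$ itself: this is immediate from \cite{ACKQWZ15}, both for the sketch-size bound and for the existence of a hard graph for sparsification. For $\pred{unCut}$ I would use the reduction $\gamma$ exactly as in \theoremref{thm:cut_to_uncut}, but in reverse: given an $\epsilon$-$\pred{unCut}$-sketching-scheme, I can build an $\epsilon$-$\pred{Cut}$-sketching-scheme on half as many variables, because for any digraph $G$ on $V$ the identity $\pred{Cut}_G(S) = \pred{unCut}_{\gamma(G)}(S\cup -\bar S)$ (i.e.\ \eqref{eq:PHS} with $f_{\pred{unCut}}$) lets the $\pred{Cut}$-estimator call the $\pred{unCut}$-estimator on $\gamma(G)$ and the set $S\cup-\bar S$. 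Since $\card{V^\gamma}=2\card{V}$, a sketch of size $o(n/\epsilon^2)$ for $\pred{unCut}$ on $n$ variables would give a sketch of size $o(m/\epsilon^2)$ for $\pred{Cut}$ on $m=n/2$ variables, a contradiction; and the same reduction transports the sparsification hard instance (a sparsifier of $\gamma(G)$ for $\pred{unCut}$ pulls back to a $\pred{Cut}$-sparsifier of $G$ with the same number of edges).

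The genuinely new case is $\pred{Or}$ (and then $\pred{nAnd}$, $\overline{10}$ by symmetry). Here the obstacle is that $\gamma$ alone does not express $\pred{Cut}$ as a single $\pred{Or}$ query, which is exactly why \theoremref{thm:cut_to_uncut} needed the three-term identity \eqref{eq:PHS3}. The plan is to exploit \eqref{eq:PHS3} in the other direction: since $\pred{Cut}_{\gamma(H)}(f^i_{\pred{Or}}(S))$ for the three choices of $f^i_{\pred{Or}}$ sum (with the $\tfrac12$ factor) to $\pred{Or}_H(S)$, I can invert this to write a single $\pred{Cut}$ value in terms of a few $\pred{Or}$ values. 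Concretely, I would take the hard $\pred{Cut}$ instance $G$ on $m$ vertices from \cite{ACKQWZ15}, pass to $\gamma(G)$, and observe that $\pred{Cut}_{\gamma(G)}$ on the relevant family of sets can be recovered from $\pred{Or}_{G}$-queries on a constant number of related sets — using that $f^1_{\pred{Or}}(S)=S$, $f^2_{\pred{Or}}(S)=-S$, $f^3_{\pred{Or}}(S)=S\cup-S$ and that each $\pred{Cut}_{\gamma(G)}(S')$ for $S'$ of the form $S$, $-S$, or $S\cup-S$ in turn equals (a combination of) $\pred{Cut}$ values on the original $G$. Equivalently, and perhaps more cleanly, I would build a fresh digraph on $O(m)$ vertices by a small gadget that turns an edge-weighted graph into one where a prescribed $\pred{Or}$-query returns its cut value, so that any $o(n/\epsilon^2)$-bit $\pred{Or}$-sketch yields an $o(m/\epsilon^2)$-bit $\pred{Cut}$-sketch. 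This gives the $\Omega(n/\epsilon^2)$ sketch-size lower bound for $\pred{Or}$, and tracing the same gadget at the level of sparsifiers (a $\pred{Or}$-sparsifier of the gadget graph restricts/pulls back to a $\pred{Cut}$-sparsifier of the \cite{ACKQWZ15} graph with no more edges) yields the sparsification lower bound.

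Finally, $\pred{nAnd}$, $\overline{10}$, and (for the sparsification half) $\overline{01}$ follow from $\pred{Or}$ by the already-recorded symmetries: $\pred{nAnd}$, $\overline{10}$, $\overline{01}$ have the three-term $\gamma$-decompositions \eqref{eq:PHS3} with $f^i_{\pred{nAnd}}$, $f^i_{\overline{10}}$, $f^i_{\overline{01}}$, so the identical reduction (compose with complementation $S\mapsto\bar S$ and/or negation in one coordinate, which only renames the vertex set of $\gamma(G)$) converts any sketching-scheme or sparsifier for these predicates into one for $\pred{Cut}$ on $\Theta(n)$ variables. The main thing to get right, and the step I expect to be most delicate, is the $\pred{Or}$ reduction: I need the combination of $\pred{Or}$-queries that reproduces a single $\pred{Cut}$-query to use only $O(1)$ queries on sets that the sketch is actually allowed to be asked about (subsets of the gadget's vertex set), and to make sure the $(1\pm\epsilon)$ multiplicative guarantee survives taking such a fixed linear combination with bounded coefficients — which it does because all terms are nonnegative and the coefficients are $O(1)$, so relative error is preserved up to a constant, and a constant factor in $\epsilon$ is harmless for an $\Omega(n/\epsilon^2)$ bound.
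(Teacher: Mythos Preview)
Your treatment of $\pred{Cut}$ and $\pred{unCut}$ is fine and matches the paper. The gap is in the $\pred{Or}$ case, and it is the step you yourself flag as delicate.

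You propose to ``invert'' \eqref{eq:PHS3} so as to express a single $\pred{Cut}$ value as a bounded, \emph{nonnegative} linear combination of $\pred{Or}$ values on some $O(n)$-vertex gadget. You then argue that $(1\pm\epsilon)$ multiplicative error is preserved ``because all terms are nonnegative and the coefficients are $O(1)$''. But you never construct such a combination, and there is a real obstruction: the natural identities all involve subtraction. For instance on the same graph $G$ one has
\[
\pred{Cut}_G(S)=\pred{Or}_G(S)+\pred{Or}_G(\bar S)-\pred{Or}_G(V),
\]
and the bipartite-double-cover attempts likewise yield $\pred{Or}_H(T)=c\cdot|E|-\pred{Cut}_G(S)$ rather than $\pred{Cut}_G(S)$ itself. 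With a negative coefficient, multiplicative $(1\pm\epsilon)$ guarantees on the individual $\pred{Or}$ terms do \emph{not} compose to a multiplicative guarantee on the difference, because the two large terms can nearly cancel. So the plan as written does not go through: either you must exhibit a nonnegative combination (which you have not), or you must control the cancellation, which requires extra information and a different error analysis.

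The paper does the latter. It uses the identity $\pred{Cut}_G(S)=2\,\pred{Or}_G(S)-\sum_{v\in S}\deg_G(v)$, stores the degree sequence alongside the $\pred{Or}$-sketch, and crucially exploits the specific structure of the hard family $\mathcal{F}$ from \cite{ACKQWZ15}: all relevant queries lie inside a single block $G_j$ with at most $1/(2\epsilon^4)$ edges, so a $(\mu\epsilon)$-multiplicative $\pred{Or}$ error becomes an additive $\mu/\epsilon^3$ error on $\pred{Cut}$, and the \cite{ACKQWZ15} lower bound is stated for exactly this additive regime. The degree list costs only $O(n\log(1/\epsilon))$ bits, negligible against $\Omega(n/\epsilon^2)$. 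The sparsifier lower bound repeats this with the integer-rounding step of \cite{ACKQWZ15} applied to $\pred{Or}$. Once $\pred{Or}$ is handled, $\pred{nAnd}$, $\overline{10}$, $\overline{01}$ follow from the single-query identities $\pred{Or}_G(S)=\pred{nAnd}_{\gamma(G)}(\bar S\cup-\bar S)=\overline{01}_{\gamma(G)}(S\cup-\bar S)=\overline{10}_{\gamma(G)}(\bar S\cup-S)$, not from the three-term decomposition you cite.
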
		

\begin{proof}
	We follow the line-of-proof of Theorems 4.1 and 4.2 in \cite{ACKQWZ15}.
	Specifically, they show that the sketch-size of every $\epsilon$-\pred{Cut}-sketching-scheme is $\Omega(n/\epsilon^2)$ bits,
	by proving that a certain family $\mathcal{F}$ of $n$-vertex graphs 
	is hard to sketch, and consequently to sparsify. 
By similar arguments to \theoremref{thm:cut_to_uncut}, 
this lower bound easily extends to \pred{unCut}. 
Indeed, recall that $\pred{Cut}_G(S)=\pred{unCut}_{\gamma(G)}\left(S\cup -\bar{S}\right)$, and thus a $\epsilon$-\pred{unCut}-sparsifier (or sketch) for $\gamma(G)$ yields an $\epsilon$-\pred{Cut}-sparsifier (or sketch) 
for $G$ with the same number of edges (size).
	
Once we prove the lower bound for predicate \pred{OR},
a reduction from \pred{OR} using $\gamma$ will extend it also to \pred{nAnd}, $\overline{10}$ and $\overline{01}$, because
	\begin{equation}\label{eq:or_by_nand}
\pred{Or}_{G}(S)=\pred{nAnd}_{\gamma(G)}(\bar{S}\cup-\bar{S})=\overline{01}_{\gamma(G)}(S\cup-\bar{S})=\overline{10}_{\gamma(G)}(\bar{S}\cup-S).	
	\end{equation} 	
We will thus focus on the predicate \pred{OR}. 
As it is symmetric predicate, we can work with graphs rather then digraphs. 
The main observation in our proof is that for every undirected graph $G=(V,E,w)$,
	if $\deg_G(v)$ denotes the degree of vertex $v$, then 
	\begin{equation} \label{eq:Or2Cut}
	\forall S\subset V, \qquad 
	\pred{Cut}_G(S) = 2\cdot\pred{OR}_G(S) - \sum_{v\in S}\deg_G(v).
	\end{equation}
	
	The graph family $\mathcal{F}$ consists of graphs $G$ constructed as follows.
	Let $s_1,\dots,s_{n/2}\in \{0,1\}^{1/\epsilon^2}$ be balanced $1/\epsilon^2$ bit-strings (i.e., each $s_i$ has normalized Hamming weight exactly $1/2$), 
	and let the graph $G$ be a disjoint union of the graphs 
	$\{G_{j}\mid j\in[\epsilon^{2}n/2]\}$,
	where each $G_j$ is a bipartite graph, whose two sides, 
	each of size $1/\epsilon^2$, are denoted $L(G_j)$ and $R(G_j)$.
	The edges of $G$ are determined by $s_1,\dots,s_{n/2}$, 
	where each bit string $s_i$ is indicates the adjacency between 
	vertex $i\in\cup_j L(G_j)$ and the vertices in the respective $R(G_j)$.
	They further observe (in Theorem 4.2) that the lower bound holds even if
	the sketching scheme is relaxed as follows:
	\begin{enumerate}
		\item \label{itm:cut_types} 
		The estimation is required only for cut queries contained in a single $G_j$,
		namely, cut queries $S\cup T$ where $S\subset L(G_j)$ and $T\subset R(G_j)$ 
		for the same $j$.
		\item \label{itm:additive_err} 
		The estimation achieves additive error $\mu/\epsilon^3$, where $\mu=10^{-4}$
		(instead of multiplicative error $1\pm \epsilon$).
	\end{enumerate}
	
To prove a sketch-size lower bound for a 
$(\mu\epsilon)$-\pred{OR}-sketching-scheme $(\sk^{\pred{OR}},\est^{\pred{OR}})$,
we assume it has sketch-size $s=s(n,\epsilon)$ bits, 
and use it to construct a
\pred{Cut}-sketching-scheme $(\sk^{\pred{Cut}},\est^{\pred{Cut}})$ that achieves 
	the estimation properties \ref{itm:cut_types} and \ref{itm:additive_err} 
	on graphs of the aforementioned form, 
	and has sketch-size $s+2n\log (1/\epsilon)$ bits.
	Then by \cite{ACKQWZ15}, this sketch-size must be $\Omega(n/\epsilon^2)$,
	and we conclude that $s=\Omega(n/\epsilon^2)$ as required.
	
	Given a graph $G\in\mathcal{F}$, 
	let $\sk^{\pred{Cut}}_G$ be a concatenation of $\sk^{\pred{OR}}_G$ 
	and a list of all vertex degrees in $G$. 
	The degrees in $G$ are bounded by $1/\epsilon^2$, 
	hence the size of $\sk^{\pred{Cut}}_G$ is indeed $s+2n\log (1/\epsilon)$ bits. 
	Given a cut query $S\cup T$ contained in some $G_j$,
	define the estimation algorithm (which we now construct for \pred{Cut}) to be
	\begin{equation} \label{eq:est_cut}
	\est^{\pred{Cut}}(\sk^{\pred{Cut}}_G,S\cup T ) 
	\eqdef 2\cdot \est^{\pred{OR}}(\sk^{\pred{OR}}_G,S\cup T )
	- \sum_{v\in S\cup T}\deg_G(v) .
	\end{equation}
	
	Let us analyze the error of this estimate. 
	First, observe that as in each $G_j$ there are precisely $ \frac{1}{2\epsilon^4}$ edges, $\pred{OR}_G(S\cup T)\le \frac{1}{2\epsilon^4}$, 
	and thus 
	\[
	\est^{\pred{OR}}(\sk^{\pred{OR}}_G,S\cup T )
	\in (1\pm \mu\epsilon)\cdot \pred{OR}_G(S\cup T)
	\subseteq \pred{OR}_G(S\cup T) \pm \frac{\mu}{2\epsilon^3}~.
	\]
	Plugging this estimate into \eqref{eq:est_cut} and then 
	recalling our initial observation \eqref{eq:Or2Cut}, we obtain as desired
	\begin{align*}
	\est^{\pred{Cut}}(\sk^{\pred{Cut}}_G,S\cup T ) 
	&\in 2\cdot \pred{OR}_G(S\cup T ) \pm \frac{\mu}{\epsilon^3}
	- \sum_{v\in S\cup T}\deg_G(v) \\
	&= \pred{Cut}_G(S\cup T) \pm \frac{\mu}{\epsilon^3} ~.
	\end{align*}
	
	To prove a lower bound on the size of an \pred{OR}-sparsifier, 
	we follow the argument in \cite[Theorem 4.2]{ACKQWZ15}, 
	which shows that given an $\epsilon$-\pred{Cut}-sparsifier $G_\epsilon$ 
	with $s=s(n,\epsilon)$ edges for a graph $G\in \mathcal{F}$, 
	there is a $\pred{Cut}$-sparsifier $G_\mu$ of $G_\epsilon$, 
	with additive error $\mu/2\epsilon^3$,
	such that $G_\mu$ has only integer weights and henceforth 
	can be encoded using $O(s(\mu^{-2}+\log(\epsilon^{-2}n/s)))$ bits.
	In fact, there is nothing special here about \pred{Cut}. 
	The same proof will work (with the same properties) for predicate $\pred{OR}$, assuming a sparsifier is required to be a subgraph 
	(to remove this restriction, just erase all the edges between $G_j$ to $G_i$ for $i\ne j$, which adds only a small additive error).
	
	Now suppose that every graph $G$ of the form specified above admits a $\frac{\mu}{2}\epsilon$-\pred{OR}-sparsifier $G_\epsilon$ with $s$ edges. 
	Then as explained above (about repeating the argument of \cite{ACKQWZ15})
	there is a graph $G_\mu$ that sparsifies $G_\epsilon$ with additive error $\mu/2\epsilon^3$, and can be encoded by a string $\mathcal{I}_G$ of size $O(s\log(\epsilon^{-2}n/s))$ bits (recall that $\mu$ is a constant).
	Use it to construct a \pred{Cut}-sketching-scheme with additive error $\mu/\epsilon^3$ as follows. 
	Given the graph $G$, set $\sk^{\pred{Cut}}_G$ to be the concatenation of $\mathcal{I}_G$ and a list of the degrees of all the vertices in $G$. 
	Then 
	$\card{\mathcal{I}_G} = O(s \log(\epsilon^{-2}n/s)) + 2n\log(1/\epsilon)$.
	For a cut query $S\cup T$ contained in some $G_j$, 
	define the estimation algorithm (using the \pred{OR} sparsifier) to be
	\begin{align*}
	\est^{\pred{Cut}}(\sk^{\pred{Cut}}_G,S\cup T ) 
	&\eqdef 2\cdot \pred{OR}_{G_\mu}(S\cup T )-\sum_{v\in S\cup T}\deg_{G}(v) .
	\intertext{Then we can again analyze it by plugging the above error bounds 
		and then using \eqref{eq:Or2Cut},}
	\est^{\pred{Cut}}(\sk^{\pred{Cut}}_G,S\cup T ) 
	&\in 2\cdot \pred{OR}_{G_\epsilon}(S\cup T ) \pm \frac{\mu}{2\epsilon^3}
	- \sum_{v\in S\cup T}\deg_G(v)
	\\
	&\in 2\cdot \pred{OR}_G(S\cup T ) \pm \frac{\mu}{\epsilon^3} 
	- \sum_{v\in S\cup T}\deg_G(v)
	\\
	&= \pred{Cut}_G(S\cup T) \pm \frac{\mu}{\epsilon^3} ~.
	\end{align*}
	By \cite{ACKQWZ15}, the sketch-size must be 
	$|\mathcal{I}_G|=\Omega(n/\epsilon^2)$,
	hence $s=\Omega(n/\epsilon^2)$ (for at least one graph $G\in\mathcal{F}$) as required. 
\end{proof}

\section{Multiple Predicates and Applications}
\label{sec:applications}

In this section we extend \theoremref{thm:cut_to_uncut} to VCSPs using
multiple types of predicates. 
In particular, we prove sparsifability for some classical problems.
Again, our sparsification results are stated as existential bounds, 
but these sparsifiers can actually be computed in polynomial time.

\begin{theorem}\label{thm: genreal CSP sparsification}
For every $\epsilon\in (0,1)$ and a VCSP $(V,\Pi,w)$ whose constraints 
$\langle \left(v,u\right),\pred{P}\rangle\in \Pi$ all satisfy $\pred{P}\notin\left\{ \pred{nOr},01, \pred{Dicut}, \pred{And}\right\}$, 
there exists an $\epsilon$-sparsifier for $\mathcal{I}$ with $O(|V|/\epsilon^{2})$ constraints.
\end{theorem}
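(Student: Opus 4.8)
The plan is to reduce to the single‑predicate statement of \theoremref{thm:cut_to_uncut} by splitting the instance according to predicate type. Write $\mathcal{G}:=\aset{\pred{Cut},\pred{unCut},\pred{Or},\pred{nAnd},\overline{10},\overline{01},x0,x1,0x,1x,\vec{1},\vec{0}}$ for the twelve ``good'' predicates, i.e.\ all sixteen predicates except the four forbidden ones. Given a VCSP $\mathcal{I}=(V,\Pi,w)$ all of whose constraints use predicates from $\mathcal{G}$, partition the constraint set as $\Pi=\bigcup_{\pred{P}\in\mathcal{G}}\Pi_{\pred{P}}$, where $\Pi_{\pred{P}}$ collects those constraints whose predicate is $\pred{P}$; when the same ordered pair carries several constraints with distinct predicates we simply treat them as distinct constraints throughout, so this partition is well defined. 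Let $\mathcal{I}_{\pred{P}}=(V,\Pi_{\pred{P}},w)$ be the corresponding single‑predicate sub‑instance, with associated digraph $G_{\pred{P}}=G^{\mathcal{I}_{\pred{P}}}$.

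For each $\pred{P}\in\mathcal{G}$, apply \theoremref{thm:cut_to_uncut} to $G_{\pred{P}}$: since $\pred{P}\in\mathcal{G}$, the resulting sub‑digraph $(G_{\pred{P}})_\epsilon$ with $O(|V|/\epsilon^2)$ edges is an $\epsilon$‑$\pred{P}$‑sparsifier of $G_{\pred{P}}$, and by \obref{obs: vcsp_graph_similarity} it yields an $\epsilon$‑sparsifier $(\mathcal{I}_{\pred{P}})_\epsilon=(V,(\Pi_{\pred{P}})_\epsilon,w_\epsilon^{(\pred{P})})$ of $\mathcal{I}_{\pred{P}}$ with $O(|V|/\epsilon^2)$ constraints. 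Now set $\mathcal{I}_\epsilon:=(V,\Pi_\epsilon,w_\epsilon)$, where $\Pi_\epsilon:=\bigcup_{\pred{P}\in\mathcal{G}}(\Pi_{\pred{P}})_\epsilon\subseteq\Pi$ and $w_\epsilon$ agrees with $w_\epsilon^{(\pred{P})}$ on $(\Pi_{\pred{P}})_\epsilon$ for each $\pred{P}$.

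It remains to verify the two required properties. For the size bound, $|\mathcal{G}|\le 12$ is an absolute constant, so $|\Pi_\epsilon|\le\sum_{\pred{P}\in\mathcal{G}}|(\Pi_{\pred{P}})_\epsilon|=O(|V|/\epsilon^2)$. For correctness, fix an assignment $A:V\to\zo$. Since $\Val$ is, by definition, additive over any partition of the constraint set, $\Val_{\mathcal{I}}(A)=\sum_{\pred{P}\in\mathcal{G}}\Val_{\mathcal{I}_{\pred{P}}}(A)$ and likewise $\Val_{\mathcal{I}_\epsilon}(A)=\sum_{\pred{P}\in\mathcal{G}}\Val_{(\mathcal{I}_{\pred{P}})_\epsilon}(A)$. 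Each summand satisfies $\Val_{(\mathcal{I}_{\pred{P}})_\epsilon}(A)\in(1\pm\epsilon)\cdot\Val_{\mathcal{I}_{\pred{P}}}(A)$, and because all weights are non‑negative, $\Val_{\mathcal{I}_{\pred{P}}}(A)\ge 0$; summing these inequalities preserves the multiplicative factor, giving $\Val_{\mathcal{I}_\epsilon}(A)\in(1\pm\epsilon)\cdot\Val_{\mathcal{I}}(A)$, as needed.

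This argument has no genuinely hard step; the two points that require care are (i) that sparsifying a sum of non‑negative quantities may be done summand‑by‑summand — which is precisely why the forbidden predicates, for which even a single term resists sparsification by \theoremref{thm:reduction_to_And}, must be excluded — and (ii) that the number of predicate types is a constant, so taking the union of the per‑type sparsifiers does not blow up the asymptotic size. If one wanted a cleaner bound $O(|V|/\epsilon^2)$ with a small explicit constant, one could instead run a single cut‑sparsification on a suitable disjoint union of the graphs $\gamma(G_{\pred{P}})$ in place of one sparsification per predicate, but this refinement is not needed for the stated theorem.
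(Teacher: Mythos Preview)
Your proposal is correct and follows essentially the same approach as the paper: partition $\Pi$ by predicate type, apply \theoremref{thm:cut_to_uncut} to each single-predicate sub-instance, and take the union, using additivity of $\Val$ and the fact that the number of predicate types is an absolute constant. The paper's proof is slightly terser but the argument is identical.
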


This bound 
is tight, according to \theoremref{thm:or_sketch}.
We prove it by a straightforward application of \theoremref{thm:cut_to_uncut}.
Partition $\mathcal{I}$ to disjoint VCSPs according to the predicates in the constraints, and then for each sub-VCSP find an $\epsilon$-sparsifier using \theoremref{thm:cut_to_uncut}. The union of this sparsifiers is an $\epsilon$-sparsifier for $\mathcal{I}$. 
A formal proof follows.

\begin{proof}[Proof of \theoremref{thm: genreal CSP sparsification}]
		For each predicate $\pred{P}$, let $\Pi^{P}=\left\{ \pi\in\Pi\mid\pi=\left\langle \left(v,u\right),\pred{P}\right\rangle \right\}$. Note that $\{\Pi^P\}$ forms a partition of $\Pi$. For each $\pred{P}$, let  $\mathcal{I}^P=(V,\Pi^P,w^P)$ where $w^P$ is the restriction of $w$ to $\Pi^P$. Let  $\mathcal{I}^P_\epsilon=(V,\Pi^P_\epsilon,w^P_\epsilon)$ be an $\epsilon$-$\pred{P}$-sparsifier for $\mathcal{I}^P$ with $|\Pi^P_\epsilon|=O(|V|/\epsilon^2)$ constraints according to \theoremref{thm:cut_to_uncut} (recall that $\pred{P}\notin\left\{ \pred{nOr},01, \pred{Dicut}, \pred{And}\right\}$). Set $\mathcal{I}_\epsilon=(V,\Pi_\epsilon,w_\epsilon)$, $\Pi_\epsilon=\bigcup_P\Pi^P_\epsilon$ and $w_\epsilon=\bigcup_P w^P_\epsilon$. For every assignment $A$, 
		\begin{eqnarray*}
			\mbox{Val}_{\mathcal{I}_{\epsilon}}(A)
			&=&\sum_{\pi_{i}\in\Pi_{\epsilon}}w_{\epsilon}\left(\pi_{i}\right)\cdot p_{i}\left(A(v_{i}),A(u_{i})\right)\\
			&=&\sum_{\pred{P}}\sum_{\pi_{i}\in\Pi_{\epsilon}^{P}}w_{\epsilon}^{P}\left(\pi_{i}\right)\cdot \pred{P}\left(A(v_{i}),A(u_{i})\right)\\
			&\in&\left(1\pm\epsilon\right)\cdot\sum_{P}\sum_{\pi_{i}\in\Pi^{P}}w^{P}\left(\pi_{i}\right)\cdot \pred{P}\left(A(v_{i}),A(u_{i})\right)\\
			&=&\left(1\pm\epsilon\right)\cdot\sum_{\pi_{i}\in\Pi}w\left(\pi_{i}\right)\cdot p_{i}\left(A(v_{i}),A(u_{i})\right)\\
			&=&\left(1\pm\epsilon\right)\cdot\mbox{Val}_{\mathcal{I}}(A),
		\end{eqnarray*}
and note that indeed $|\Pi_\epsilon|\le O\left(n/\epsilon^{2}\right)$.
	\end{proof}

\pred{2SAT} (boolean satisfiability problem over constraints with 2 variables) can be viewed as a VCSP which uses only the predicates $\pred{Or}$, $\pred{nAnd}$, $\overline{10}$ and $\overline{01}$. 
By \theoremref{thm: genreal CSP sparsification}, for every \pred{2SAT} formula $\Phi$ over $n$ variables, and for every $\epsilon\in(0,1)$, there is a sub-formula $\Phi_\epsilon$ with $O(n/\epsilon^2)$ clauses, such that $\Phi$ and $\Phi_\epsilon$ have the same value for every assignment up to factor $1+\epsilon$.\footnote{We use here the version of \pred{2SAT} where each clause has weight and every assignment has value rather then the version when we only ask weather there an assignment that satisfies all the clauses.}

\pred{2LIN} is a system of linear equations (modulo 2),
where each equation contains 2 variables and has a nonnegative weight. 
Notice that the equation $x+y=1$ is a constraint using the \pred{Cut} predicate while the equation $x+y=0$ is a constraint using the \pred{unCut} predicate. 
By \theoremref{thm: genreal CSP sparsification}, if $n$ denotes the number of
variables, then for every $\epsilon\in(0,1)$ we can construct a sparsifier with only $O(n/\epsilon^2)$ equations 
(i.e., a re-weighted subset of equations, such that on every assignment it agrees with the original system up to factor $1+\epsilon$).

We note that by our lower bound (\theoremref{thm:or_sketch}), 
there are instances of \pred{2SAT} (\pred{2LIN}) for which every $\epsilon$-sparsifier must contain $\Omega(n/\epsilon^2)$ clauses (equations).

\section{Further Directions}
\label{sec:discussion}

Based on the past experience of cut sparsification in graphs
-- which has been extremely successful in terms of techniques, applications, extensions and mathematical connections --
we expect VCSP sparsification to have many benefits.
A challenging direction is to identify which predicates admit sparsification,
and our results make the first strides in this direction.

We now discuss potential extensions to our results in the previous sections
(which characterize two-variable predicates over a boolean alphabet).
We first consider predicates with more variables, 
and in particular show sparsification for $k$-\pred{SAT} formulas, 
in Section~\ref{sec:MoreVars}.
We then consider predicates with large alphabets in Section~\ref{sec:LargeAlphabet},
showing in particular a sparsifier construction for $k$-\pred{Cut}, 
and that linear equations (modulo $k\ge3$) are not sparsifiable.
We also consider sketching schemes, notable we discuss a more loose sketching model called \emph{for-each} in Section~\ref{sec:sketching}. 
Finally, we study \emph{spectral} sparsification for \pred{unCut}, 
a notion that preserves some algebraic properties in addition to the ``uncuts''
in Section~\ref{sec:spectral}.

\subsection{Predicates over more variables and $k$-\pred{SAT}}
\label{sec:MoreVars}
It is natural to ask for the best bounds on the size of $\epsilon$-\pred{P}-sparsifiers for different predicates $\pred{P}:\{0,1\}^k\rightarrow \{0,1\}$.
	A first step towards answering this question was already done by \cite{KK15}.
	\begin{theorem}[\cite{KK15}]\label{thm: cut hypergraph}
		For every hypergraph H = (V,E,w) with hyperedges containing at most $r$ vertices, and $\epsilon\in (0,1)$, there is a re-wighted subhypergraph $H_\epsilon$ with $O(n(r+\log n)/\epsilon^2)$ hyperedges such that
		\[
			\forall S\subseteq V,~~~\pred{Cut}_{H{_\epsilon}}(S)\in (1\pm\epsilon)\cdot \pred{Cut}_H(S) .
		\]
	\end{theorem}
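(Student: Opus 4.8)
The plan is to transplant the \Benczur--Karger importance-sampling scheme from graphs to hypergraphs. First I would assign to each hyperedge $e$ a \emph{strength} $\gamma_e$, namely the largest $k$ such that $e$ lies in a vertex-induced sub-hypergraph every cut of which has weight at least $k$, and prove the budget bound $\sum_e w_e/\gamma_e=O(n)$ by the usual peeling argument (repeatedly remove a maximal induced piece of large minimum cut, charge the hyperedges it contains, and recurse). Then, setting $\rho=\Theta\big((r+\log n)/\epsilon^2\big)$, I would let $H_\epsilon$ keep each hyperedge $e$ independently with probability $p_e=\min\{1,\rho\, w_e/\gamma_e\}$ and reweight a surviving hyperedge to $w_e/p_e$. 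This makes $\pred{Cut}_{H_\epsilon}(S)$ an unbiased estimator of $\pred{Cut}_H(S)$ for every $S$, and the expected number of hyperedges is $\sum_e p_e\le\rho\sum_e w_e/\gamma_e=O\big(n(r+\log n)/\epsilon^2\big)$, which a Chernoff bound promotes to an actual bound with high probability.

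It then remains to show that with high probability $\pred{Cut}_{H_\epsilon}(S)\in(1\pm\epsilon)\pred{Cut}_H(S)$ for all $S$ at once. For a fixed $S$ I would split the crossing hyperedges into strength scales $\gamma_e\in[2^i,2^{i+1})$: within a scale the surviving summands are bounded by $w_e/p_e\le 2^{i+1}/\rho$, whereas inside any connected component of the sub-hypergraph of strength-$\ge 2^i$ hyperedges that $S$ nontrivially splits, the crossing weight at strengths $\ge 2^i$ is at least $2^i$ (being a cut of a sub-hypergraph whose minimum cut is $\ge 2^i$). A Chernoff bound then yields a per-scale, per-component failure probability of about $\exp(-\Omega(\epsilon^2\rho))$, and one unions over scales, over components, and over the cuts inside each component; the first two unions are cheap. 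Since strengths and the sampling can be carried out efficiently, the construction would also run in polynomial time.

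The hard part will be this last union bound, over the cuts of a component, which is exactly where the rank $r$ must enter. Karger's random-contraction bound on the number of approximately-minimum cuts does not transfer cleanly to rank-$r$ hypergraphs: contracting a hyperedge fuses up to $r$ vertices at once, and simply replacing each hyperedge by a clique distorts cut values by a factor as large as $\sim r$, which would cost a factor $\sim r$ (or even $r^2$) in the exponent of the count. What one actually needs is the sharper statement that an $n$-vertex, rank-$r$ hypergraph has at most $2^{O(\alpha r)}\,n^{O(\alpha)}$ cuts of weight at most $\alpha$ times its minimum cut, so that the logarithm of the number of cuts at approximation factor $\alpha$ is $O\big(\alpha(r+\log n)\big)$; feeding this into the Chernoff union bound is precisely what both forces and is matched by the choice $\rho=\Theta\big((r+\log n)/\epsilon^2\big)$. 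I expect establishing that cut-counting bound to be the genuine content of the proof; granting it, the remaining steps are a routine transcription of \Benczur--Karger.
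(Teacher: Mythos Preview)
The paper does not prove this theorem at all: it is stated with the citation \cite{KK15} and used as a black box. There is therefore no ``paper's own proof'' to compare against; your proposal is an attempt to reconstruct the argument of Kogan and Krauthgamer rather than anything in the present paper.

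That said, your outline is broadly faithful to how \cite{KK15} actually proceeds: they do port the \Benczur--Karger strength-based sampling to hypergraphs, and the technical heart is exactly the hypergraph cut-counting lemma you isolate, namely that a rank-$r$ hypergraph on $n$ vertices has at most $O(2^{r\alpha} n^{2\alpha})$ cuts of weight at most $\alpha$ times the minimum cut. Your intuition that this is ``the genuine content'' is correct, and your diagnosis of why the $r$ enters (random contraction of a hyperedge merges up to $r$ vertices) matches their argument. One small correction: the budget bound in \cite{KK15} is $\sum_e w_e/\gamma_e \le n-1$ for hypergraphs just as for graphs, so your $O(n)$ is right but you should not expect any $r$-dependence there; all of the $r$ really does come from the cut-counting step, as you anticipated.
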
    
	Here we say that a hyperedge $e$ is \emph{cut} by $S$ if $S\cap e\notin \{\emptyset,e\}$ (i.e., not all the vertices in $e$ are in the same side). Observe that \pred{Cut} is equivalent to the predicate \pred {NAE} (not all equal). In particular \theoremref{thm: cut hypergraph} implies that for every VCSP using only \pred{NAE}, there is an $\epsilon$-sparsifier with $O(n(r+\log n)/\epsilon^2)$ constraints.
	
A $k$-\pred{SAT} is essentially a VCSP that uses only predicates with a single $0$ in their truth table. 
\cite{KK15} use \theoremref{thm: cut hypergraph} to construct an $\epsilon$-sketching-scheme with sketch-size $\tilde{O}(nk/\epsilon^2)$ for $k$-SAT formulas (i.e., only for VCSPs of this particular form).
We observe that their sketching scheme can be further used to construct an $\epsilon$-sparsfiers, as follows.

First, recall how the sketching scheme of \cite{KK15} works.
Given a $k$-\pred{SAT} formula $\Phi=(V,\mathcal{C},w)$ (variables, clauses, weight over $\mathcal{C}$), construct a hypergraph $H$ 
on vertex set $V\cup -V\cup \{f\}$. 
We associate the literal $v_i$ with vertex $v_i$, the literal $\neg v_i$ with vertex $v_{-i}$, and use $f$ to represent the ``false''. 
Each clause becomes a hyperedge consisting of $f$ and (the vertices associated with) the literals in $\mathcal{C}$ 
(for example $v_5\vee \neg v_7 \vee v_{12}$ becomes $\{f,v_5,v_{-7},v_{12}\}$).
Observe that given a truth assignment $A:V\to\zo$, 
if we define $S_{A}:=\aset{ u\mid A(u)=0 }$, 
then $\Val_\Phi(A)=\pred{Cut}_{H}(S_{A}\cup\{f\})$,
and using \theoremref{thm: cut hypergraph} this provides a sketching scheme.
Moreover, given an $\epsilon$-\pred{Cut}-sparsifier $H_\epsilon$ for $H$, let $\Phi_\epsilon$ be the formula which has only the clauses associated with edges that ``survived'' the sparsification, with the same weight. 
Notice that for every assignment $A$,
\[
  \Val_{\Phi_\epsilon}(A)
  =\pred{Cut}_{H_\epsilon}(S_{A}\cup\{f\})
  \in(1\pm\epsilon) \cdot \pred{Cut}_{H}(S_{A}\cup\{f\})
  = (1\pm\epsilon)\cdot\Val_\Phi(A)~.
\]
\begin{theorem}
Given $k$-\pred{SAT} formula $\Phi$ over $n$ variables and parameter $\epsilon\in(0,1)$, there is an $\epsilon$-sparsifier sub-formula $\phi_\epsilon$ with $O(n(k+\log n)/\epsilon^2)$ clauses.
\end{theorem}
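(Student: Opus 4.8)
The plan is to follow verbatim the literal-vertex encoding sketched just above the statement, reducing a weighted $k$-\pred{SAT} instance to a hypergraph cut instance and then invoking \theoremref{thm: cut hypergraph} as a black box. First I would take the formula $\Phi=(V,\mathcal{C},w)$ and build the hypergraph $H$ on vertex set $V\cup -V\cup\{f\}$: each clause $C\in\mathcal{C}$ becomes a hyperedge $e_C$ containing $f$ together with $v_i$ for every positive literal $v_i$ of $C$ and $v_{-i}$ for every negative literal $\neg v_i$ of $C$, with weight $w(C)$. Since each clause has at most $k$ literals, each hyperedge has at most $r:=k+1$ vertices, and $H$ has $2n+1$ vertices.

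The key identity is that for any truth assignment $A\colon V\to\zo$, setting $S_A:=\aset{u\mid A(u)=0}$, a clause $C$ is \emph{unsatisfied} by $A$ exactly when all its literals evaluate to $0$, i.e.\ when every vertex of $e_C\setminus\{f\}$ lies in $S_A$; as $f\notin S_A$, this is precisely the condition that $e_C$ is \emph{not} cut by $S_A\cup\{f\}$. Hence $\Val_\Phi(A)=\pred{Cut}_H(S_A\cup\{f\})$, and the same identity holds verbatim for any sub-formula of $\Phi$ together with the corresponding sub-hypergraph of $H$. Now apply \theoremref{thm: cut hypergraph} to $H$ with parameter $\epsilon$, obtaining a re-weighted sub-hypergraph $H_\epsilon$ preserving all cuts of $H$ within $1\pm\epsilon$ and having $O((2n+1)(r+\log(2n+1))/\epsilon^2)=O(n(k+\log n)/\epsilon^2)$ hyperedges. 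Let $\phi_\epsilon$ be the sub-formula consisting of exactly those clauses $C$ whose hyperedge $e_C$ survives in $H_\epsilon$, with the inherited re-weighting. Then for every assignment $A$, applying the identity to $\phi_\epsilon$ and then to $\Phi$ gives
\[
  \Val_{\phi_\epsilon}(A)=\pred{Cut}_{H_\epsilon}(S_A\cup\{f\})\in(1\pm\epsilon)\cdot\pred{Cut}_H(S_A\cup\{f\})=(1\pm\epsilon)\cdot\Val_\Phi(A),
\]
so $\phi_\epsilon$ is an $\epsilon$-sparsifier of the claimed size.

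There is very little to overcome here: the argument is essentially a packaging of \theoremref{thm: cut hypergraph} via the encoding, and it already appears almost in full in the paragraph preceding the statement. The one point worth a sentence of care is that $H_\epsilon$ preserves \emph{all} cuts of $H$, whereas we only invoke the guarantee on cuts of the special form $S_A\cup\{f\}$ — this is harmless, since we simply use a subset of what \theoremref{thm: cut hypergraph} provides. A second, purely cosmetic, point is that distinct clauses may produce the same hyperedge; one either allows parallel hyperedges in $H$ (which \theoremref{thm: cut hypergraph} tolerates) or merges identical clauses beforehand so that $\phi_\epsilon$ is literally a re-weighted sub-multiset of $\mathcal{C}$. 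Finally, $r\le k+1$ rather than $k$ affects nothing asymptotically.
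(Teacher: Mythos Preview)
Your proposal is correct and follows essentially the same approach as the paper: the argument in the paragraph immediately preceding the theorem statement is exactly the literal-vertex encoding reduction to \theoremref{thm: cut hypergraph} that you carry out, and your displayed chain of inclusions matches the paper's verbatim. Your additional remarks (that $r\le k+1$ and $\card{V(H)}=2n+1$ so the bound simplifies to $O(n(k+\log n)/\epsilon^2)$, and the handling of duplicate clauses) only add rigor to what the paper leaves implicit.
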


In contrast, we are not aware of any nontrivial sparsification result
for the parity predicate (on $k\ge3$ boolean variables),
and this remains an interesting open problem.

\subsection{Predicates over larger Alphabets}
\label{sec:LargeAlphabet}

Our results deal only with predicates that get two input values in $\{0,1\}$. 
A natural generalization is to sparsify a VCSP that uses a predicate 
over an alphabet of size $k$,
i.e., $\pred{P}:[k]\times [k]\rightarrow \{0,1\}$, 
where $[k] \eqdef \{0,1,\dots,k-1\}$.
One predicate that we can easily sparsify is \pred{NE} (not-equal), which is satisfied if the two constrained variables have are assigned different values.
Indeed, in the graphs language, this is called a $\pred{k-Cut}$, 
where the value of a partition $(S_0,\dots,S_{k-1})$ of the vertices is the total weight of all edges with endpoints in different parts. 
It turns out that $\epsilon$-$\pred{Cut}$-sparsifier is in particular an $\epsilon$-$\pred{k-Cut}$-sparsifier, using the following well-known 
double-counting argument:
\begin{eqnarray*}
\pred{k-Cut}_{G_{\epsilon}}\left(S_{0},\dots,S_{k-1}\right)
&=&\frac{1}{2}\cdot\left[\pred{Cut}_{G_{\epsilon}}\left(S_{0},\overline{S_{0}}\right)+\dots+\pred{Cut}_{G_{\epsilon}}\left(S_{k-1},\overline{S_{k-1}}\right)\right]\\
&\in&\left(1\pm\epsilon\right)\cdot\frac{1}{2}\cdot\left[\pred{Cut}_{G}\left(S_{0},\overline{S_{0}}\right)+\dots+\pred{Cut}_{G}\left(S_{k-1},\overline{S_{k-1}}\right)\right]\\
&=&\left(1\pm\epsilon\right)\cdot\pred{k-Cut}_{G}\left(S_{0},\dots,S_{k-1}\right)~.
\end{eqnarray*}

In contrast, linear-equation predicates are non-sparsifiable
for alphabet $[k]$ of size $k\ge 3$. 
Specifically, for $a\in [k]$, let the predicate $\pred{Sum}_a$ 
be satisfied by $x,y\in [k]$ iff $x+y=a \pmod k$.
Then for every positively weighted digraph $G=(V,E,w)$,
and every $\epsilon\in(0,1)$, $a\in[k]$,
every $\pred{Sum}_a$-$\epsilon$-sparsifier $G_\epsilon=(V,E_\epsilon,w_\epsilon)$ 
of $G$ must have $E=E_\epsilon$.
The argument is similar to the proof of~\theoremref{thm:And}. 
Assume for contradiction there exist $e\in E\setminus E_\epsilon$. 
Choose $x,y,z\in[k]$ that satisfy $x+y=a$, however 
the three sums $z+x$, $z+y$, $z+z$ are all not equal to $a$ (modulo $k$); 
this is clearly possible for $k\ge 4$, 
and easily verified by case analysis for $k=3$.
Consider an assignment where the endpoints of $e$ have values $x$ and $y$, respectively, and all other vertices have value $z$. 
Under this assignment, the value of $G$ is $w(e)>0$, 
while the value of $G_\epsilon$ is zero, a contradiction.

\subsection{Sketching}
\label{sec:sketching}
In \theoremref{thm:or_sketch} we showed that for every predicate $\pred{P} \in \aset{ \pred{Cut},\pred{unCut},\pred{Or}, \pred{nAnd},\overline{10}}$, 
the sketch-size of every $\epsilon$-\pred{P}-sketching-scheme  is $\Omega(n/\epsilon^2)$.

Let us now address predicates with a single $1$ in their truth table.
In the spirit of the proof of \theoremref{thm:And}, given encoding $\sk_G$ by an $\epsilon$-\pred{And}-sketching-scheme we can completely restore the graph $G$. As there are $2^{n \choose 2}$ different graphs, the sketch-size of every $\epsilon$-\pred{And}-sketching-scheme is at least $\Omega(n^2)$ bits. 
Imitating the proof of \theoremref{thm:reduction_to_And}, we can extend this lower bound to \pred{Dicut}, $01$ and $10$.

\paragraph{For-each sketches.}	
In order to reduce storage space of a sketch, one might weaken the requirements even further and allow the sketch to give a good approximation only with high probability. 
A \emph{for-each sketching scheme} is a pair of algorithms $(\sk,\est)$;
algorithm $\sk$ is a randomized algorithm that given a graph $G$ 
returns a string $\sk_G$, whose distribution we denote by $\mathcal{D}_{G}$;
algorithm $\est$ is given such a string $\sk_G$ and a subset $S\subseteq V$, 
and returns (deterministically) a value $\est(\sk_G,S)$.
We say that it is an \emph{$(\epsilon,\delta)$-$\pred{P}$-sketching-scheme} if 
\[
\forall G=(V,E,w), \forall S\subseteq V, \quad
\Pr_{\sk_G\in \mathcal{D}_{G} }\left[\est (\sk_G,S)\in\left(1\pm\epsilon\right)\cdot\pred{P}_{G}\left(S\right)\right]\ge1-\delta~.
\]

\cite{ACKQWZ15} showed that if we consider $n$-vertex graphs with weights  only in the range $[1,W]$, 
then there is an $(\epsilon,{1}/{\poly(n)})$-$\pred{Cut}$-sketching-scheme 
with sketch-size $\tilde{O}\left(n\epsilon^{-1}\cdot\log\log W\right)$ bits.
Imitating \theoremref{thm:cut_to_uncut}, we can construct  $(\epsilon,{1}/{\poly(n)})$-$\pred{P}$-sketching-scheme with the same sketch-size for every predicate \pred{P} whose truth table does not have
a single $1$ (and weights restricted to the range $[1,W]$).
A nearly-matching lower bound by \cite{ACKQWZ15} shows that 
for every $\epsilon\in(2/n,1/2)$, 
every $(\epsilon,1/10)$-$\pred{Cut}$-sketching-scheme must have sketch-size 
$\Omega(n/\epsilon)$. 
Using $\gamma$, this lower bound can be extended to \pred{unCut}. 
This technique does not work for predicates with three $1$'s in their truth table.
Fortunately, we can duplicate the proof of \cite{ACKQWZ15} while replacing \pred{Cut} by \pred{Or} and using the fact that for every two vertices $v,u$ in the graph $G$, it holds that
$\pred{Or}(\left\{ v\right\} ) 
+ \pred{Or}(\left\{ u\right\} ) 
- \pred{Or}(\left\{ v,u\right\} )
= \indic_{\aset{\left\{ u,v\right\} \in E}}$. 
We omit the details of this straightforward argument. 
A reduction from \pred{OR} using $\gamma$ and equation \ref{eq:or_by_nand} will extend the lower bound also to \pred{nAnd},$\overline{10}$ and $\overline{01}$.

Given a sketch $\sk_G$ (i.e., one sample from distribution $\mathcal{D}_G$) which encodes an $(\epsilon,\delta)$-\pred{And}-sketching-scheme, 
one can reconstruct every edge of $G$ (every bit of the adjacency matrix) 
with constant probability. 
Standard information-theoretical arguments (indexing problem) imply that the sketch-size of 
every $(\epsilon,\delta)$-\pred{And}-sketching-scheme is $\Omega(n^2)$ bits. 
Using $\gamma$ we can extend this lower bound to \pred{Dicut}, $01$ and $10$.

\subsection{\pred{unCut} Spectral Sparsifiers}
\label{sec:spectral}

Given an undirected $n$-vertex graph $G=(V,E,w)$, the Laplacian matrix is defined as  $L_G=D_G-A_G$ where $A_G$ is the adjacency matrix (i.e. $A_{i,j}=w_{i,j}=w(\{v_i,v_j\})$) and $D_G$ is a diagonal matrix of degrees (i.e. $D_{i,i}=\sum_{j\ne i}w_{i,j}$ and for $i\ne j$, $D_{i,j}=0$).
		For every $x\in\mathbb{R}^n$ it holds that $x^{t}L_{G}x=\sum_{\{v_{i},v_{j}\}\in E}w_{i,j}\cdot\left(x_{i}-x_{j}\right)^{2}$. In particular, for $\mathbf{1}_{S}$ the indicator vector of some subset $S\subseteq V$ it holds that $\mathbf{1}_{S}^{t}L_{G}\mathbf{1}_{S}=\pred{Cut}_{G}(S)$.
A subgraph $H$ of $G$ is called an $\epsilon$-\emph{spectral}-\emph{sparsifier} of $G$ if
\[
  \forall x\in \mathbb{R}^n, \quad
  x^{t}L_{H}x\in (1\pm\epsilon)\cdot x^{t}L_{G}x~.
\]
		Note that an $\epsilon$-spectral-sparsifier is in particular an $\epsilon$-\pred{Cut}-sparsifier. Nonetheless, spectral sparsifiers preserve additional properties such as the eigenvalues of the Laplacian matrix (approximately).
		\cite{BSS14} showed that every graph admits an $\epsilon$-spectral-sparsifier with $O(n/\epsilon^2)$ edges.
		
\begin{definition}
Given a graph $G$, we call $U_G= (D_G+A_G)$ the \emph{Negated Laplacian} of $G$. 
Given a subset $S\subseteq V$, let $\phi_S\in\mathbb{R}^n$ be a vector such that $\phi_{S,i}=1$ if $v_i\in S$ and $\phi_{S,i}=-1$ otherwise.
\end{definition}

		One can verify that for arbitrary $x\in\mathbb{R}^n$,
		\[
			x^{t}U_{G}x=\sum_{i<j}w_{i,j}\cdot\left(x_{i}+x_{j}\right)^{2}
		\]
		
		In particular, for every subset $S\subseteq V$,	it holds that 
		
		\[
		\phi_{S}^{t}U_{G}\phi_{S}=4\cdot \pred{unCut}_{G}(S)~.
		\]
		
		Next, we will show how we can use $U_G$ to construct an \pred{unCut}-sparsifier $G_\epsilon$ (in alternative way to \theoremref{thm:cut_to_uncut}) such that $U_{G_\epsilon}$ has (approximately) the same eigenvalues as $U_G$.
		A matrix $M\in\mathbb{R}^{n\times n}$ is called \emph{BSDD} (\emph{Balanced Symmetric Diagonally Dominant}) if $M=M^t$ and for every index $i$, $M_{i,i}=\sum_{j\ne i}|M_{i,j}|$.
		Note that $L_G$ and $U_G$ are both BSDD.
		A matrix $M'$ is \emph{governed} by $M$ if 
		whenever $M'_{i,j}\ne 0$, also $M_{i,j}\ne 0$ and has the same sign.
		Note that if $H$ is a subgraph of $G$ then $U_H$ is governed by $U_G$.
		A matrix $M'$ is called an \emph{$\epsilon$-spectral-sparsifier} of $M$ if $M'$ is governed by $M$ and
\[
  \forall x\in \mathbb{R}^n, \quad
  x^{t}M'x\in(1\pm\epsilon)\cdot x^{t}Mx~.
\]
		
		The following was implicitly shown in \cite{ACKQWZ15}.
		\begin{theorem}[\cite{ACKQWZ15}]\label{thm: SDD sparsification}
			Given BSDD matrix $M\in \mathbb{R}^{n\times n}$ and parameter $\epsilon\in(0,1)$, there is an $\epsilon$-spectral-sparsifier $M'$ for $M$ where $M'$ is BSDD matrix with $O(n/\epsilon^2)$ non-zero entries. 
		\end{theorem}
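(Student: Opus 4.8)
The plan is to exhibit $M$ as a nonnegative combination of rank-one positive semidefinite ``edge'' matrices, each supported on a single pair of coordinates with a fixed sign pattern, then apply the spectral sparsification theorem of Batson, Spielman and Srivastava \cite{BSS14} to that decomposition, and finally check that a nonnegative re-weighting of such elementary matrices automatically stays BSDD and governed by $M$. Concretely, for every pair $i<j$ with $M_{i,j}\ne 0$ I would put $w_{ij}:=|M_{i,j}|>0$ and $b_{ij}:=e_i+\sigma_{ij}e_j\in\mathbb{R}^n$, where $\sigma_{ij}:=M_{i,j}/|M_{i,j}|\in\{+1,-1\}$ and $e_1,\dots,e_n$ is the standard basis. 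A one-line computation shows that $w_{ij}\,b_{ij}b_{ij}^{t}$ has $(i,i)$- and $(j,j)$-entry $w_{ij}$, has $(i,j)$- and $(j,i)$-entry $\sigma_{ij}w_{ij}=M_{i,j}$, and vanishes elsewhere. Summing over all such pairs, the off-diagonal entries of $\sum_{i<j}w_{ij}b_{ij}b_{ij}^{t}$ match those of $M$ exactly, while its $i$-th diagonal entry is $\sum_{j\ne i}|M_{i,j}|$, which equals $M_{i,i}$ precisely because $M$ is \emph{balanced}. Hence $M=\sum_{i<j}w_{ij}\,b_{ij}b_{ij}^{t}$; in particular $M\succeq 0$, and $M$ is written as a nonnegative combination of at most $\binom n2$ rank-one matrices of the special form $b\,b^{t}$ with $b\in\{e_i\pm e_j\}$.

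Next I would invoke spectral sparsification (the same tool of \cite{BSS14} behind the graph results used in this paper): for any decomposition $M=\sum_k v_k v_k^{t}$ there are nonnegative coefficients $s_k$, of which only $O(n/\epsilon^2)$ are nonzero, with $M':=\sum_k s_k v_k v_k^{t}$ satisfying $x^{t}M'x\in(1\pm\epsilon)\cdot x^{t}Mx$ for all $x\in\mathbb{R}^n$. If the $v_k$'s do not span $\mathbb{R}^n$ one applies this inside their span, which here is exactly the column space of $M$; for $x$ with $Mx=0$ one has $b_{ij}^{t}x=0$ for every surviving pair, since $x^{t}Mx=\sum_{i<j}w_{ij}(b_{ij}^{t}x)^2=0$ forces each nonnegative term to vanish, whence $x^{t}M'x=0=x^{t}Mx$ as well, and the estimate extends to all of $\mathbb{R}^n$. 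Applying this with the $v_k$'s taken to be the vectors $\sqrt{w_{ij}}\,b_{ij}$ yields $M'=\sum_{i<j}s_{ij}w_{ij}\,b_{ij}b_{ij}^{t}$ with $O(n/\epsilon^2)$ nonzero terms.

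Finally I would check the structural conclusions. Each $b_{ij}b_{ij}^{t}$ is symmetric, has nonzero entries only in rows and columns $i,j$, has diagonal entries $1$ at positions $i$ and $j$, and has $(i,j)$-entry $\sigma_{ij}\in\{\pm1\}$, so it is itself BSDD; since symmetry and the balance identity $M_{i,i}=\sum_{j\ne i}|M_{i,j}|$ are preserved under nonnegative linear combinations, $M'$ is BSDD. Every nonzero entry of $M'$ lies in the support of some $b_{ij}b_{ij}^{t}$ with $s_{ij}>0$, hence occurs at a position where $M$ is nonzero and, off the diagonal, carries the same sign $\sigma_{ij}$ as $M_{i,j}$, so $M'$ is governed by $M$. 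Moreover the support of $M'$ is contained in the union of the supports of $O(n/\epsilon^2)$ edge matrices, each of constant size, so $M'$ has $O(n/\epsilon^2)$ nonzero entries.

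I expect the only genuinely delicate points to be the balance bookkeeping and the kernel: it is exactly the ``$M$ balanced'' hypothesis that lets the diagonal of the edge decomposition close up to $M$, and exactly the fact that each elementary matrix $b_{ij}b_{ij}^{t}$ is already balanced that lets balance survive the BSS re-weighting, while the possible rank-deficiency of $M$ is dispatched by restricting \cite{BSS14} to the column space as above. One could alternatively route this through a bipartite-double-cover-type construction (turning the signless-Laplacian edges coming from positive off-diagonals into ordinary Laplacian edges on $2n$ vertices and invoking graph spectral sparsification), but that variant needs extra care to pull a sparsifier back to an $n\times n$ matrix, so the direct rank-one decomposition seems cleanest; everything beyond the two points above is a routine application of \cite{BSS14}.
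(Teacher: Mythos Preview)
Your argument is correct. The paper does not actually prove this theorem; it simply attributes the statement to \cite{ACKQWZ15} (``The following was implicitly shown in \cite{ACKQWZ15}'') and uses it as a black box. So there is no in-paper proof to compare against, and what you have written is a clean self-contained proof that the paper could have included.

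One small imprecision worth tightening: the assertion that the balance identity is ``preserved under nonnegative linear combinations'' is false in general (e.g., add a BSDD matrix with $M_{1,2}=1$ to one with $M_{1,2}=-1$). What makes it true in your situation is that each off-diagonal position $(i,j)$ receives a contribution from exactly one elementary matrix $b_{ij}b_{ij}^{t}$, so there is no cancellation and $|M'_{i,j}|=s_{ij}w_{ij}$ on the nose; then $M'_{i,i}=\sum_{j\ne i}s_{ij}w_{ij}=\sum_{j\ne i}|M'_{i,j}|$ follows directly. State it that way rather than as a general closure property. Everything else --- the decomposition $M=\sum_{i<j}w_{ij}b_{ij}b_{ij}^{t}$ using the balance hypothesis, the invocation of \cite{BSS14} on the column space, the handling of $\ker M$, and the verification that $M'$ is governed by $M$ --- is correct as written. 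Your remark that a bipartite-double-cover route is also available but messier to pull back is apt; the direct rank-one decomposition you chose is indeed the cleanest path.
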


		Fix a graph $G$ and parameter $\epsilon$, according to \theoremref{thm: SDD sparsification}, there is a BSDD balanced matrix $H$ with $O(n/\epsilon^2)$ non-zero entries, that governed by $U_G$ which is a $\epsilon$-spectral-sparsifier for $U_G$.
		All this properties define a unique graph $G_\epsilon$ such that $U_{G_\epsilon}=H$. In particular $G_\epsilon$ is  $\epsilon$-\pred{unCut}-sparsifier of $G$ with  $O(n/\epsilon^2)$ edges.	
	
	\bibliographystyle{alphaurlinit}
	\bibliography{robi,csp_spars}
	
\end{document}